\providecommand{\algorithmname}{Algorithm}
\newcommand{\manuallabel}[2]{\def\@currentlabel{#2}\label{#1}}
\pgfplotsset{compat=1.14}
\newtheorem{lemma}{Lemma}
\newtheorem{theorem}{Theorem}
\newtheorem{proposition}{Proposition}
\newtheorem{cor}{Corollary}
\title{Simple Coding Techniques \\ for Many-Hop Relaying}
\author{Yan Hao Ling and Jonathan Scarlett\thanks{The authors are with the  Department of Computer Science, National University of Singapore (NUS). Jonathan Scarlett is also with the Department of Mathematics, NUS, and the Institute of Data Science, NUS.  (e-mail: \url{e0174827@u.nus.edu};  \url{scarlett@comp.nus.edu.sg})}\thanks{This work was supported by the Singapore National Research Foundation (NRF) under grant number R-252-000-A74-281.}\thanks{\copyright 2022 IEEE.  This work is published IEEE Transactions on Information Theory, and can be found at \url{https://doi.org/10.1109/TIT.2022.3180001}.}}
\newcommand{\Psf}{\mathsf{P}}
\newcommand{\calO}{\mathcal{O}}
\newcommand{\epsbar}{\bar{\epsilon}}
\begin{document}
    \maketitle
    \begin{abstract}
        In this paper, we study the problem of relaying a single bit of information across a series of binary symmetric channels, and the associated trade-off between the number of hops $m$, the transmission time $n$, and the error probability.  We introduce a simple, efficient, and deterministic protocol that attains positive information velocity (i.e., a non-vanishing ratio $\frac{m}{n}$ and small error probability) and is significantly simpler than existing protocols that do so.  In addition, we characterize the optimal low-noise and high-noise scaling laws of the information velocity, and we adapt our 1-bit protocol to transmit $k$ bits over $m$ hops with $\calO(m+k)$ transmission time. 
    \end{abstract}

    \begin{IEEEkeywords}
        Information velocity, relay channels, many-hop relaying, network information theory
    \end{IEEEkeywords}
    
    \medskip
    \section{Introduction}
    
    Relay channels are one of the most fundamental building blocks of network information theory.  While extensive research attention has been paid to few-hop (particularly two-hop) relaying settings, the information-theoretic understanding of many-hop systems (e.g., a number of hops growing with the block length) is much more limited.  In this paper, we seek to improve the understanding of the simple yet fundamental problem of relaying one bit (or more generally, multiple bits) over a long tandem of channels (e.g., see \cite{onebit,schulman_1994}; we discuss the history of this problem in Section \ref{sec:related}).
    
    \subsection{Problem Setup}
    
    We consider a chain of $m+1$ nodes indexed by $\{0,1,\dotsc,m\}$; node $0$ is the encoder, node $m$ is the decoder, and the remaining nodes are relays.  Node $0$ is given a binary random variable $\Theta$ that takes a value in $\{0,1\}$, each with probability $\frac12$.  At each time step $i \in \{1,2,\ldots, n\}$, node $j$ ($0 \leq j \leq m-1$) sends one bit $X_{i,j}$ to node $j+1$ through a binary symmetric channel (BSC) with crossover probability $p \in \big(0,\frac{1}{2}\big)$; we denote the output of this transmission by $Y_{i,j}$.  The channels are assumed to be memoryless and independent across nodes. Importantly, $X_{i,j}$ can only depend on what was previously received, i.e., $Y_{1,j-1}, Y_{2,j-1}, \ldots, Y_{i-1, j-1}$.  After $n$ time steps, node $m$ outputs an estimate $\hat{\Theta}$ of $\Theta$.
    
    For a given number of nodes $m$ and BSC crossover probability $p$, let $n^*(m,p)$ be the minimum block length $n$ for which there exists a protocol (encoding, relaying, and decoding) such that $\mathbb{P}(\hat{\Theta} \ne \Theta) \le \frac{1}{3}$.\footnote{Any other fixed constant in $\big(0,\frac{1}{2}\big)$ could also be used, or one could require that $\mathbb{P}(\hat{\Theta} \ne \Theta) \to 0$ as $m \to \infty$; our results will handle all of these variants.}  The {\em information velocity} is defined as follows:\footnote{We are not aware of a formal proof that this limit always exists.  However, all of our results will remain valid regardless of whether the $\liminf$ or $\limsup$ is used in place of the regular limit.}
    \begin{equation}
        v(p) = \lim_{m \rightarrow \infty} \frac{m}{n^*(m,p)}. \label{eq:iv}
    \end{equation}
    We note that $v(0) = 1$, and $v(p) \in [0,1]$ for all $p$.  Analogously, we say that a protocol (or more precisely, a sequence of protocols indexed by $m$) {\em attains positive information velocity} if\footnote{Throughout the paper, we use the standard asymptotic notation $\calO(\cdot)$, $\Omega(\cdot)$, $\Theta(\cdot)$, $o(\cdot)$, and $\omega(\cdot)$.  The distinction between asymptotic $\Theta(\cdot)$ notation and the unknown bit $\Theta \in \{0,1\}$ will be clear from the context, as well as the presence/absence of brackets.} $n = \calO(m)$ and the error probability is at most $\frac{1}{3}$ for all sufficiently large $m$.  We will also be interested in the possible decay rate of the error probability towards zero.
    
    \subsection{Contributions and Related Work} \label{sec:related}
    
    {\bf Contributions.} In this paper, we present a simple, efficient, and deterministic protocol attaining positive information velocity (see Theorem \ref{thm:main_1bit} and Corollary \ref{cor:main_1bit} below), based on the idea of having different nodes estimate $\Theta$ by processing bit sequences of different lengths (e.g., most nodes process 1 or 3 bits at a time, but a small minority of nodes consider blocks of size $\omega(1)$ or even $\calO(m)$) and using a recursive decoding technique.  In addition, we characterize the asymptotic behavior of $v(p)$ in the limits $p \to 0$ and $p \to \frac{1}{2}$, showing that $v(p) = 1-\Theta(p)$ as $p \to 0$, and $v\big(\frac{1-\delta}{2}\big) = \Theta(\delta^2)$ as $\delta \to 0$.  Finally, we provide a multi-bit generalization of our protocol that can reliably transmit $\calO(k)$ bits in $\calO(m+k)$ transmissions.
    
    \begin{figure*}
    \begin{centering}
        \includegraphics[width=0.8\textwidth]{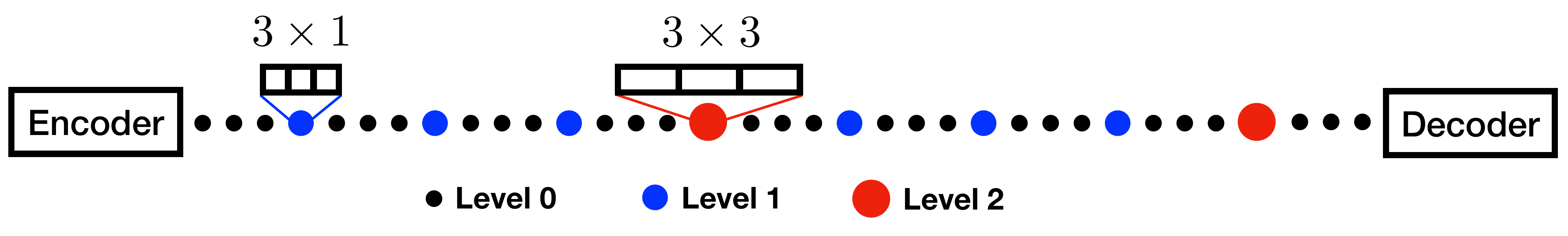}
        \par
    \end{centering}
    
    \caption{Illustration of our protocol; level 0 nodes directly transmit their input, whereas higher level nodes process blocks. \label{fig:OneBit}}
\end{figure*}
    
    {\bf Related work.} As noted in \cite{onebit}, the preceding problem was posed at the 2015 Simons Institute program on information theory by Yury Polyanskiy, who also coined the term ``information velocity''.  At the same workshop, Leonard J.~Schulman pointed out an existing solution based on tree codes \cite {schulman_1994}. 
    
    In more detail, Rajagopalan and Schulman \cite{schulman_1994} (building on earlier work by Schulman \cite{schulman1992comm,schulman1993det}) studied general techniques for relating noiseless vs.~noisy computation in distributed systems using tree codes.  An important implication of their main result is that $v(p) > 0$, which is obtained by specializing to a chain-structured network.  However, the associated computational complexity is exponential, requiring the use of recursive maximum-likelihood decoding.  A closed-form upper bound (i.e., converse) on the information velocity can also be inferred directly from \cite{schulman_1994}, which we will strengthen as part of our analysis of the high-noise asymptotics.  In the low-noise limit, we find that the converse of \cite{schulman_1994} already provides tight asymptotic scaling. 
    
    A simple and efficient protocol for the setting we consider (and its multi-bit extension) was also proposed in \cite{schulman_1994}, but it falls {\em very narrowly} short of attaining positive information velocity, due to requiring $n = \omega(m)$ transmissions; the $\omega(m)$ scaling is extremely mildly super-linear according to the iterated logarithm.  This $\omega(m)$ behavior arises from each node waiting to receive $\omega(1)$ useful bits from the previous node before forwarding (see Appendix \ref{sec:schulman} for more details), which our protocol avoids.
    
    A more recent work of Gelles {\em et al.}~\cite{gelles2011efficient} introduces a variant of tree codes that can be decoded in polynomial time.  The focus in \cite{gelles2011efficient} is on two-way communication channels, but based on \cite{schulman_1994}, it seems feasible that their techniques could be adapted to one-way settings such as the one we consider.  On the other hand, such an adaptation would still leave the following relative limitations compared to our approach:
    \begin{itemize}
        \item According to \cite[Thm.~V.I]{gelles2011efficient}, when $n = \calO(m)$, the error probability can (only) be controlled to be a constant such as $\frac{1}{3}$, whereas our protocol will give stronger $e^{-\Theta(\sqrt{n})}$ decay, and simple variations will further boost this to $e^{-\Theta(n^{1-\epsilon})}$ for any $\epsilon > 0$.  On the other hand, we still fall short of the $e^{-\Theta(n)}$ decay shown using the inefficient approach of \cite{schulman_1994}.
        \item The computational complexity in \cite{gelles2011efficient} in ${\rm poly}(n)$ with an unspecified power, whereas in our protocol the nodes each incur a computational complexity of $\calO(n)$.
        \item The approach in \cite{gelles2011efficient} is randomized, and assumes the availability of common randomness across the nodes, whereas our protocol is deterministic.
    \end{itemize}
    Perhaps more important than any of these factors is the fact that that the tree codes of \cite{schulman_1994,gelles2011efficient} were designed for much more general distributed computation settings, and are accordingly highly versatile but unnecessarily complex for the simpler information velocity problem.  In our judgment, the main advantage of our protocol is its simplicity.
    
    A recent work (concurrent with ours) established the exact information velocity of the binary erasure channel (BEC) and other erasure-type channels \cite{Dom22}, and also studied the exponential decay rate of the error probability.  The protocols therein appear to rely heavily on the property $\Pr[x~{\rm sent} \,|\, x~{\rm received}]=1$ for every non-zero input $x$ (e.g., this holds for the BEC and the Z-channel), which allows information to be propagated with certainty.  As a result, handling the BSC with their approach does not appear to be feasible.
    
    Another related line of works studies the optimal error exponent for transmitting one bit over a 2-hop relay network \cite{onebit,jog2020teaching,Lin21}.  To our knowledge, none of the coding techniques in these works are suited to attaining positive information velocity.  However, we note that the work of Huleihel, Polyanskiy, and Shayevitz \cite{onebit} highlighted interesting connections between the information velocity problem and the 2-hop setting.  In particular, based on the observation that positive information velocity is possible, they conjectured that the 1-hop and 2-hop error exponents should coincide in the high-noise limit so that ``information propagation does not slow down''.  This conjecture was resolved in the affirmative in our subsequent work \cite{Lin21}, where we showed that the 1-hop and 2-hop exponents in fact coincide at \emph{any} noise level.
    
    Finally, we briefly mention that there exists extensive work on relay channels studying the capacity \cite[Ch.~16]{el2011network}, error exponent \cite{tan2015reliability}, hypothesis testing protocols \cite{salehkalaibar2019hypothesis}, and so on.  We also briefly mention that related many-hop transmission problems were studied for (continuous) channels with input constraints in \cite{polyanskiy2015dissipation}.

    \section{The Protocol and its Guarantees} \label{sec:protocol}
    
    \subsection{Protocol and Main Result}
    
    We will first describe the protocol assuming $p$ to be sufficiently small (namely, below $\frac{1}{48}$), and then turn to arbitrary $p \in \big(0,\frac{1}{2}\big)$.
    
    The protocol is illustrated in Figure \ref{fig:OneBit}, and described in detail as follows.  We arrange the nodes into levels. A (relay) node indexed by $i \in \{1,\dotsc,m-1\}$ has level $l \in \{0,1,2,\dotsc\}$ if $4^l$ divides $i$ but $4^{l+1}$ does not.  The nodes at various levels operate as follows:
    \begin{itemize}
        \item Node 0 (the encoder) repeatedly transmits $\Theta$ throughout the course of the protocol.
        \item A level-$0$ node reads one bit, ``estimates'' $\Theta$ to be equal to that bit, and forwards its estimate in one channel use.  This is done repeatedly throughout the course of the protocol. 
        \item A level-$1$ node reads 3 bits, estimates $\Theta$ according to a majority vote, and forwards its estimate in 3 channel uses (i.e., repeating 3 times).  Again, this is done repeatedly.
        \item Continuing recursively, a level-$l$ node reads in $3^l$ bits, breaks them up into 3 blocks of $3^{l-1}$ bits each, performs level-($l-1$) decoding on each block, and then estimates $\Theta$ according to a majority of 3.  It then forwards its estimate by repeating $3^l$ times.
        \item Node $m$ (i.e., the decoder) applies level-$L$ decoding with $L = \lfloor \log_4 m \rfloor$, regardless of whether or not $m$ is a power of $4^l$ for any $l$.  Accordingly, the encoder sends out $3^L$ bits.
    \end{itemize}
    The intuition behind this protocol is that while some coding is needed to ensure that the information is not lost, not every node needs to code over a long length (which would cause significant delays).  Instead, reliability can be maintained with most nodes doing little or no coding, but a small fraction of nodes doing more.  We note that the constants $3$ and $4$ can be replaced by other values (subject to the latter being larger), as we discuss in Appendix \ref{app:error_prob}.
    
    We observe that the propagation of information is delayed in the protocol (e.g., even in a simpler setting where every node adopts the level-0 strategy, node $i$ would not receive meaningful information until time $i$).  To account for this, each node ignores all initial bits whose values are unspecified above,\footnote{This is possible because the protocol is assumed to be known to all nodes.} and a level-$l$ relay node only starts transmitting after receiving $3^l$ (non-ignored) bits.  
    
    
    We are now ready to state our first main result, which will be proved in the following subsection.
    
    \begin{theorem} \label{thm:main_1bit}
        If $p \le \frac{1}{48}$, then for any number of hops $m$, the preceding protocol attains an error probability of at most $\frac{1}{12}$ with a total transmission time of $n = \calO(m)$.
    \end{theorem}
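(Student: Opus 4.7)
The proof establishes the transmission-time and error-probability claims separately.

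For the transmission time, the plan is to observe that a level-$l$ node must wait $3^l$ time steps beyond its predecessor in order to accumulate its $3^l$ non-ignored inputs. Telescoping along the chain, the time at which node $m-1$ first transmits is at most $\sum_{i=1}^{m-1} 3^{l_i}$, and the decoder needs $3^L$ additional steps. Since the number of level-$l$ nodes in $\{1,\ldots,m\}$ is at most $m/4^l$, the first sum is bounded by $m\sum_{l \geq 0}(3/4)^l \leq 4m$, and $3^L \leq m^{\log_4 3} = o(m)$; hence $n = \calO(m)$.

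For the error probability, the core structural observation is that (i) between any level-$\geq 1$ node and the previous such node there are exactly three level-0 nodes, and (ii) the BSC noises $Z_j^{(t)}$ are independent across all $(j,t)$. Consequently, conditional on the upstream level-$\geq 1$ estimate $V$, each of the three bits read by the downstream level-$\geq 1$ node is independently $V \oplus \mathrm{Bernoulli}(q)$, where $q = (1-(1-2p)^4)/2$ (the XOR of four independent BSC flips). For $p \leq 1/48$ one has $q \leq 1/12$, so the recursive-majority-of-three decoding yields a conditional error $\alpha_l(q) \leq 3\alpha_{l-1}(q)^2$ with $\alpha_1(q) \leq 3q^2 \leq 1/48$; this decays doubly exponentially in $l$.

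The remaining ingredient is propagating these bounds through the hierarchy of $4^l$-sized super-blocks by induction on $l$. Working in the spin representation $\mu_i := \e[(-1)^{\hat{\Theta}_i + \Theta}]$, level-0 nodes act as $\mu \mapsto (1-2p)\mu$ and, for i.i.d.\ spin-$\nu$ inputs, majority-of-three acts as $\mu \mapsto (3\nu - \nu^3)/2$, which follows from the identity $\mathrm{Maj}(V_1,V_2,V_3) = (V_1+V_2+V_3-V_1 V_2 V_3)/2$ for $V_j \in \{\pm 1\}$. The main obstacle is that inputs to a higher-level node are not unconditionally i.i.d.\ in the full chain, since they share upstream estimates. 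The resolution is to condition on the upstream level-$\geq 1$ estimates along the decoding path, bound the conditional error via the $\alpha_l$ recursion, and handle the conditioning by union bounds that telescope up the hierarchy of super-blocks. The threshold $p \leq 1/48$ is chosen precisely so that the resulting recursion contracts fast enough to guarantee a final error of at most $1/12$ for every $m$.
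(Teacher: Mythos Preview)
Your transmission-time argument is correct and essentially identical to the paper's: it splits $n$ into a propagation delay bounded by $\sum_l (m/4^l)\,3^l \le 4m$ and a transmission delay of $3^L = o(m)$.

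Your error-probability argument, however, has a genuine gap. The recursion $\alpha_l(q)\le 3\alpha_{l-1}(q)^2$ describes the error of recursive majority-of-three applied to $3^l$ \emph{i.i.d.} bits, each a $q$-noisy copy of a single reference value. But that is not what a level-$l$ node sees. The $3^l$ bits it reads come (after four level-$0$ hops) from the immediately preceding level-$\ge 1$ node, which during that window outputs \emph{several different} estimates (one per level-$1$ block), each of which in turn depends on earlier intermediate estimates. So there is no single $V$ with respect to which all $3^l$ inputs are i.i.d.\ $q$-flips, and the $\alpha_l$ recursion does not apply. Your proposed fix---condition on all upstream level-$\ge 1$ estimates and ``telescope'' via union bounds---is stated too vaguely to verify, and the spin-representation detour does not close the gap either.

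The paper's route is much more direct and avoids all of this. It defines $\epsilon_l$ to be the error of one level-$l$ block sent across a length-$4^l$ chain between consecutive level-$l$ nodes. The self-similar structure then gives the one-line recursion: a level-$(l+1)$ block is three \emph{independent} level-$l$ blocks, each traversing four level-$l$ hops, so by a union bound each sub-block errs with probability at most $4\epsilon_l$, and majority-of-three yields
\[
\epsilon_{l+1}\le 3(4\epsilon_l)^2 = 48\,\epsilon_l^2.
\]
Note the factor $4^2=16$, which is precisely what your $\alpha_l$ recursion omits. With $\epsilon_0=p\le 1/48$ this gives $\epsilon_l\le 1/48$ for all $l$. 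Finally, the bound $1/12$ (rather than $1/48$) arises when $m$ is not a power of $4$: the decoder may need up to $\lceil m/4^L\rceil\le 4$ level-$L$ blocks, and a union bound gives $4\cdot\tfrac{1}{48}=\tfrac{1}{12}$. Your proposal does not address this case.
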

    
    In Corollary \ref{cor:main_1bit} below, we state a variant of this result that holds for all $p \in \big(0,\frac{1}{2}\big)$ and gives decaying error probability.  Moreover, in Section \ref{sec:1bit_var} below, we discuss explicit bounds on $v(p)$, as opposed to the fact that $v(p) > 0$ alone (i.e., $n = \calO(m)$ with an unspecified constant).
    
    \subsection{Analysis of the Protocol} \label{sec:1bit_analysis}
    
    Here we prove Theorem \ref{thm:main_1bit}.  Let $\epsilon_l$ be the error probability associated with sending a single level-$l$ block over a chain of length $4^l$ between two consecutive level-$l$ nodes (or from the encoder to the first one).
    
    \begin{lemma} \label{lem:pe_recursion}
        Under the preceding protocol, for any $l \ge 0$, if $\epsilon_l \leq \frac{1}{48}$, then $\epsilon_{l+1} \leq \frac{1}{48}$.
    \end{lemma}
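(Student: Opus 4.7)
The plan is to unroll the protocol one level. Between two consecutive level-$(l+1)$ nodes lies a chain of $4^{l+1}$ hops containing exactly three intermediate level-$l$ nodes $A_1,A_2,A_3$ at positions $4^l,\,2\cdot 4^l,\,3\cdot 4^l$; these partition the chain into four sub-chains of $4^l$ hops each, and the receiver's level-$(l+1)$ decoder (which reads $3^{l+1}$ bits, splits them into three blocks of $3^l$, runs level-$l$ decoding on each, and majority-votes) plays the role of a fourth level-$l$ decoder once per sub-block. Since the level-$(l+1)$ sender outputs its single estimate $3^{l+1}$ times, each of the three sub-blocks presented to each $A_j$ (and to the receiver) is exactly the ``$3^l$ copies of a single bit'' input that the definition of $\epsilon_l$ refers to.

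Next, I would introduce, for each $j\in\{1,2,3,4\}$ (sub-chain index) and $i\in\{1,2,3\}$ (sub-block index), the indicator $E_{j,i}$ that the level-$l$ decoding at the end of sub-chain $j$ flips the sub-block-$i$ bit relative to its input. The definition of $\epsilon_l$ together with the symmetry of the BSC (the level-$l$ decoder's error probability does not depend on the value of the input bit) gives $\Pr[E_{j,i}=1]\le\epsilon_l$. The $E_{j,i}$'s are then mutually independent: different sub-chains involve physically disjoint BSCs, while within one sub-chain different sub-blocks are carried by disjoint time windows of memoryless channels and all lower-level batch processing is self-contained inside a single sub-block (since $3^k$ divides $3^l$ for every $k\le l$, lower-level batches never straddle a sub-block boundary). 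Telescoping then gives, for each $i$, that the receiver's level-$l$ estimate of sub-block $i$ equals $\Theta\oplus E_{1,i}\oplus E_{2,i}\oplus E_{3,i}\oplus E_{4,i}$.

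The per-sub-block error probability is therefore at most $q:=\tfrac{1}{2}\bigl(1-(1-2\epsilon_l)^4\bigr)$, and the level-$(l+1)$ estimate, being the majority of three independent sub-block estimates, errs with probability at most $3q^2-2q^3$, a quantity monotone increasing in $\epsilon_l$. It then remains to verify $3q^2-2q^3\le\tfrac{1}{48}$ at $\epsilon_l=\tfrac{1}{48}$: plugging in $1-2\epsilon_l=23/24$ yields $q\le\tfrac{1}{2}\bigl(1-(23/24)^4\bigr)\approx 0.078$, and then $3q^2-2q^3\approx 0.017<\tfrac{1}{48}$. I expect the main obstacle in the write-up to be the independence claim: one must carefully articulate why the error event at $A_j$ on sub-block $i$ depends only on channel noises in sub-chain $j$ during sub-block $i$'s time window, invoking both memorylessness and the alignment of all lower-level block sizes with the sub-block boundaries. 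The remaining arithmetic is routine.
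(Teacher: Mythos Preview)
Your argument is correct and follows the same decomposition as the paper: three parallel level-$l$ sub-blocks, each passing through four level-$l$ hops, followed by a majority vote. The difference is only in the bounding step. You use the exact BSC-composition formula $q=\tfrac{1}{2}\bigl(1-(1-2\epsilon_l)^4\bigr)$ and the exact majority-of-three error $3q^2-2q^3$, then verify numerically at $\epsilon_l=\tfrac{1}{48}$. The paper instead applies two union bounds---$q\le 4\epsilon_l$ for the four hops, and $3q^2$ for the majority---obtaining the closed-form recursion $\epsilon_{l+1}\le 48\epsilon_l^2$, from which $\epsilon_l\le\tfrac{1}{48}\Rightarrow\epsilon_{l+1}\le\tfrac{1}{48}$ is immediate. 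Your bound is tighter, but the paper's cruder one is cleaner and, importantly, the explicit $48\epsilon_l^2$ form is reused later to derive the $e^{-\Theta(\sqrt{n})}$ error decay; your numerical check does not directly yield that recursion. One small point: you write $\Pr[E_{j,i}=1]\le\epsilon_l$ and then bound $q$ as if equality held---this is fine by monotonicity of $\tfrac{1}{2}\bigl(1-\prod_j(1-2p_j)\bigr)$ in each $p_j$, but you should say so explicitly in the write-up.
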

    \begin{proof}
        Consider a block of $3^{l+1}$ bits corresponding to a level-$(l+1)$ block.  According to the protocol, this amounts to 3 parallel level-$l$ blocks, each of which passes through 4 level-$l$ hops (each such `hop' being a length-$4^l$ chain).  By the union bound, the failure probability for each of these level-$l$ blocks is at most $4 \epsilon_{l}$ (if a block fails, it must have failed at some hop).
        
        Observe that if we take majority of 3, and each of them independently has a failure rate of $q$, then the overall failure rate is at most $3q^2$ (i.e., at least two of the three must fail).  Since $q \le 4\epsilon_l$, the failure rate of the entire level-($l+1$) block is at most $3\cdot (4\epsilon_l)^2 = 48\epsilon_l^2 \leq \frac{1}{48}$, as required.
    \end{proof}
    
    Since $\epsilon_0 = p$, it follows by induction that if $p \leq \frac{1}{48}$, then the failure rate for any single block at any level is at most $\frac{1}{48}$.  
    
    In the case that $m$ is a power of 4, the overall error probability is precisely $\epsilon_L$ with $L = \log_4 m$, so is at most $\frac{1}{48}$.  On the other hand, if $m$ is not a power of $4$, we first consider a decoder that differs from the one we described, and then show how the analysis transfers from the former to the latter.  Specifically, we imagine a decoder that internally simulates extra hops (including artificial noise) with the protocol's level structure up to the next multiple of $4^L$, where $L = \lfloor \log_4 m \rfloor$ (e.g., increasing to $3 \times 4^2$ hops in Figure \ref{fig:OneBit}), concluding with level-$L$ decoding.  In this case, the number of invocations of level-$L$ decoding is $\lceil \frac{m}{4^L}\rceil \le 4$ (including at the final node), so by the union bound, the error probability is at most $4 \cdot \frac{1}{48} = \frac{1}{12}$. 

    The decoder that we consider has no such internal simulation, and instead directly applies level-$L$ decoding.  To see that the same result applies in this scenario, we first note that our analysis (in particular, Lemma \ref{lem:pe_recursion}) still applies whenever each BSC has crossover probability {\em at most} $p$, rather than {\em exactly} $p$.  Hence, in the above-mentioned internal simulation of extra hops, the noise can be removed without affecting the result.  Once this is done, due to the recursive nature of our decoder, running level-$L$ decoding directly is no different from running it following the lower-level simulated nodes with noiseless links (note that applying a level-$l$ decoder twice in succession is mathematically the same as applying it once).  Hence, the above-established bound of $\frac{1}{12}$ again applies.
    
    
    To attain positive information velocity, we need to check that the total transmission time $n$ is at most linear in the number of hops.  We can break down the total transmission time into two parts:
    \begin{itemize}
        \item {\bf Transmission delay}, which is the number of bits that our protocol specifies the encoder to send out.
        \item {\bf Propagation delay}, which is the time lapse between the encoder transmitting its first bit and the last node receiving that bit (or `first' could be replaced by any other bit).
    \end{itemize}
    We already showed that the transmission delay is $3^L$ with $L = \lfloor \log_4 m \rfloor$, and is thus sub-linear with respect to $m$.  It remains to consider the propagation delay.
    
    A level-$l$ relay node waits for $3^l$ bits to be received before it starts transmitting.  Hence, the total wait time summed over the $m-1$ relay nodes is at most
    \begin{equation}
        \sum_{l} m\cdot \frac{3^l}{4^l} \leq 4m,
    \end{equation}
    since by construction there are at most $\frac{m}{4^{l}}$ relay nodes at level $l$.  This is linear in $m$, as desired, completing the proof of Theorem \ref{thm:main_1bit}.
    
    \subsection{Variations and Discussion} \label{sec:1bit_var}
    
    Two notable limitations of Theorem \ref{thm:main_1bit} are the assumption $p \le \frac{1}{48}$ and the fact that the error probability is only upper bounded by $\frac{1}{12}$.  We proceed by showing that minor variations of the protocol and analysis can overcome these limitations.  We then discuss the computational complexity and explicit bounds on $v(p)$.
    
    {\bf General noise levels.} If the noise level $p$ is larger than $\frac{1}{48}$, we can use a constant-length repetition code to reduce the effective crossover probability below $\frac{1}{48}$, which amounts to a constant factor blowup in the number of channel uses.  That is, every node always waits for $r \ge 1$ bits, treats the majority vote over these $r$ bits as a single bit, and applies the above protocol.  As long as $p < \frac{1}{2}$, a large enough choice of $r$ will bring the effective crossover probability below $\frac{1}{48}$.  Hence, we immediately deduce positive information velocity for all $p \in \big(0,\frac{1}{2}\big)$.  A refined variant of this idea is explored in Section \ref{sec:noise_asymp}.
    
    {\bf Improved error probability.} Suppose that we assume $p \le \frac{1}{96}$ instead of $p \le \frac{1}{48}$ (again noting that we can generalize using the idea of repetition).  The proof of Lemma \ref{lem:pe_recursion} reveals that $\epsilon_{l+1} \le 48 \epsilon_{l}^2$, so with $\epsilon_0 \le \frac{1}{96}$, we get $\epsilon_1 \le \frac{1}{2} \cdot \frac{1}{96}$, followed by the error probability being squared (up to a constant factor) with each level increase.  To understand the behavior after $L$ levels, we momentarily ignore the factor of $48$ in the recursion $\epsilon_{l+1} \le 48 \epsilon_{l}^2$ (which will not impact the final result), and accordingly consider the quantity $p_L = {\rm sq}({\rm sq}(\dotsc {\rm sq}(p)))$, where the square function ${\rm sq}(\cdot)$ is applied $L$ times.  Taking the log gives $\log p_L = 2^{L} \log p$, and since $L = \log_4 m + \calO(1) = \frac{1}{2} \log_2 m + \calO(1)$, it follows that $p_L = e^{-\Theta(\sqrt{m})}$ for fixed $p$.  A simple generalization of this argument (accounting for the factor of 48) reveals that the error probability of our protocol decays as $e^{-\Theta(\sqrt{m})}$, or equivalently $e^{-\Theta(\sqrt{n})}$. 
    
    The above discussion is summarized in the following corollary.
    
    \begin{cor} \label{cor:main_1bit}
        For any fixed $p \in \big(0,\frac{1}{2}\big)$, our protocol, combined with a suitably-chosen constant number of repetitions, attains error probability at most $e^{-\Theta(\sqrt{n})}$ with a total transmission time of $n = \calO(m)$.
    \end{cor}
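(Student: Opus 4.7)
The plan is to combine two modifications of Theorem~\ref{thm:main_1bit}: a constant-length repetition code to bring the effective crossover probability below $\frac{1}{96}$ (rather than merely $\frac{1}{48}$), and a refinement of Lemma~\ref{lem:pe_recursion} that tracks the actual per-level error probability instead of only verifying that it remains below $\frac{1}{48}$.

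For the repetition step, every node treats blocks of $r$ consecutive channel uses as a single ``super-channel use,'' outputting the majority vote. A standard Chernoff/Hoeffding-type bound gives that the induced crossover probability decays exponentially in $r$, so for any fixed $p < \frac{1}{2}$ one can pick a constant $r = r(p)$ for which the effective crossover probability is at most $\frac{1}{96}$. This inflates both the transmission delay $3^L$ and the propagation delay $\calO(m)$ by a factor of $r$, which still yields $n = \calO(m)$.

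For the recursion, I would revisit the proof of Lemma~\ref{lem:pe_recursion}: without any early truncation, the same two-step argument (union bound over the four sub-hops, then the majority-of-three error estimate) in fact gives $\epsilon_{l+1} \le 48\epsilon_l^2$ whenever $4\epsilon_l \le 1$. Setting $\alpha_l = 48\epsilon_l$ turns this into $\alpha_{l+1} \le \alpha_l^2$, so $\alpha_L \le \alpha_0^{2^L}$. Using the effective $\epsilon_0 \le \frac{1}{96}$ gives $\alpha_0 \le \frac{1}{2}$, and hence $\epsilon_L \le \frac{1}{48} \cdot 2^{-2^L}$. Since $L = \lfloor \log_4 m \rfloor$ satisfies $2^L = \Theta(\sqrt{m})$, this is $e^{-\Theta(\sqrt{m})}$. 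The case where $m$ is not a power of $4$ is handled by exactly the same internal-simulation argument as in Theorem~\ref{thm:main_1bit}: at most four invocations of level-$L$ decoding and a union bound cost only a constant factor, which is absorbed into the exponent. Finally, since $n \ge m$ (one time step per hop) and $n = \calO(m)$, we have $\sqrt{m} = \Theta(\sqrt{n})$, yielding the stated $e^{-\Theta(\sqrt{n})}$.

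I do not expect a serious obstacle: the protocol's recursive structure was designed precisely so that the error squares at each level, and the remaining work is bookkeeping to verify that the doubly-exponential decay of $\alpha_l$ survives both the leading constant of $48$ and the simulation argument. The one point worth care is that the repetition factor $r$ must depend only on $p$ (not on $m$), so that the implicit constants in $e^{-\Theta(\sqrt{n})}$ depend only on $p$.
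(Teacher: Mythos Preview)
Your proposal is correct and follows essentially the same approach as the paper: repetition to drive the effective crossover below $\frac{1}{96}$, then the quadratic recursion $\epsilon_{l+1}\le 48\epsilon_l^2$ unrolled over $L=\lfloor\log_4 m\rfloor$ levels to get $e^{-\Theta(\sqrt{m})}=e^{-\Theta(\sqrt{n})}$. Your substitution $\alpha_l=48\epsilon_l$ is a clean way to absorb the constant that the paper leaves as ``a simple generalization of this argument (accounting for the factor of 48)''; otherwise the arguments coincide.
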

    
    In Appendix \ref{app:error_prob}, we show that the error probability can be improved to $e^{-\Theta(n^{1-\epsilon})}$ (for any $\epsilon > 0$) by adjusting the constants in our protocol and/or chaining multiple instances of the protocol one after the other.
    
    {\bf Computational complexity.} Naively, when a node processes a block of $B$ bits (namely, $B = 3^l$ at level $l$), it spends $\calO(B)$ computation time forming its estimate of $\Theta$ at the final time step of that block, e.g., by forming a ternary tree with the received bits as the leaves, and propagating up to the root.  However, this computation can be reduced by starting the processing before the whole block arrives: The first 3 bits are combined using a majority vote (and then the original 3 are discarded), the first 3 blocks-of-3 are combined when available, and so on.  In this manner, a level-$l$ node requires at most $\calO(l) \le \calO(\log m)$ computation in any given time step (i.e., the depth of the above-mentioned ternary tree), while still maintaining $\calO(B)$ total computation across the whole block of length $B$.  In particular, each node incurs $\calO(n)$ computation time in total (i.e., linear time).
    
    {\bf Explicit bounds on $v(p)$.} Throughout the paper, we provide results stating that $n = \calO(m)$, thus establishing that $v(p) > 0$, but without specifying the precise constants.  Nevertheless, the implied constants can be inferred from the proofs, with varying degrees of technical difficulty.  For example, we have the following:
    \begin{itemize}
        \item The analysis of Theorem \ref{thm:main_1bit} demonstrates that $v\big(\frac{1}{48}\big) \ge \frac{1}{4}$.
        \item Using Hoeffding's inequality (stated in \eqref{eq:Hoeffding} below) in our handling of general noise levels above, it is straightforward to verify that for $p > \frac{1}{48}$, it holds that $v(p) \ge \big( 4 \big\lceil \frac{2 \log 48}{(1-2p)^2} \big\rceil  \big)^{-1}$.  When the ceiling function equals (or is sufficiently close to) its argument, the right-hand side can further be lower bounded by $\frac{1}{31} (1-2p)^2$.
        \item In Proposition \ref{prop:converse} below, we will see that $v(p) \le (1-2p)^2$ for all $p$. 
        \item In Section \ref{sec:noise_asymp}, we will show that $v(p) \to 1$ as $p \to 0$, but we do not attempt to establish how close we get for any fixed $p$; the required $p$ to establish $v(p) \ge 0.9$ (say) appears to be very small using our methods.
    \end{itemize}
    From the second and third dot points above, we see that the achievability and converse bounds can differ by a large constant factor.  A refined analysis of our protocol (or variants thereof) may reduce this constant, e.g., replacing the values 3 and 4 in our protocol by other choices (see Appendix \ref{app:error_prob}), or using a tighter concentration bound than Hoeffding's inequality.  However, overall, attaining near-matching bounds for all $p$ remains an open challenge.

    \section{Low-Noise and High-Noise Asymptotics} \label{sec:noise_asymp}
    
    In this section, we study the asymptotics of $v(p)$ in the low-noise limit $p \to 0$, and the high-noise limit $p \to \frac{1}{2}$.  We start with the former, which turns out to be more straightforward.
    
    \subsection{Low Noise}
    
    Here we prove the following.
    
    \begin{theorem} \label{thm:low_noise}
        The information velocity satisfies $v(p) = 1 - \Theta(p)$ as $p \to 0$.
    \end{theorem}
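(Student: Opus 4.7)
The plan is to establish the matching lower and upper bounds $v(p) \ge 1 - O(p)$ and $v(p) \le 1 - \Omega(p)$ separately. The upper bound is immediate from the converse bound $v(p) \le (1-2p)^2 = 1 - 4p + 4p^2$ stated as Proposition~\ref{prop:converse} in the following section: for $p$ small we have $(1-2p)^2 \le 1 - 2p$, giving $v(p) \le 1 - \Omega(p)$.

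The substance of the proof is the lower bound. I would adapt the hierarchical protocol of Section~\ref{sec:protocol} using $p$-dependent parameters. The basic observation is that in the original protocol (with block sizes $3^l$ and spacing $4^l$), the slowdown is $\sum_l (3/4)^l = 3$, dominated by the level-1 contribution from the tight spacing of $4$ hops. For small $p$ we can afford a much sparser level-1 pattern: take spacing $K_1 = \lceil c/p\rceil$ with $c < 1/2$ a small absolute constant and block size $B_1$ a large (but $p$-independent) constant. Then each bit input to a level-1 majority vote has accumulated crossover probability at most $K_1 p \le c$, so Chernoff's inequality gives $\epsilon_1 \le \exp\bigl(-2 B_1 (1/2 - c)^2\bigr)$, which can be made as small as desired by increasing $B_1$, while the level-1 contribution to the slowdown is only $B_1/K_1 = O(p)$.

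At each subsequent level $l \ge 2$, I would choose the spacing $K_l$ so that $K_l \epsilon_{l-1} \le c$, and let $B_l$ grow geometrically (say $B_l = \beta B_{l-1}$). The recursion then yields $\epsilon_l \le \exp(-\alpha B_l)$ for $\alpha = 2(1/2-c)^2$, so the errors $\epsilon_l$ decrease doubly-exponentially in $l$ while $K_l$ grows doubly-exponentially; in particular $B_l/K_l = B_l \epsilon_{l-1}/c \ll 1$ for $l \ge 2$, so the total slowdown $\sum_l \prod_{i \le l}(B_i/K_i)$ is dominated by its $l = 1$ term and remains $O(p)$. Because $\prod_i K_i$ grows doubly-exponentially, only $L = O(\log \log m)$ levels are needed to span the chain, and the transmission delay $\prod_i B_i$ remains subpolynomial in $m$ and hence negligible. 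The main obstacle I anticipate is the bookkeeping: ensuring that the absolute constants $c$, $B_1$, and $\beta$ are simultaneously compatible so that $B_l/K_l < 1$ from $l = 2$ onward (which reduces to choosing $B_1$ large enough relative to $c$ and $\beta$), and handling integrality of the $K_l$ by rounding without upsetting any of the inequalities above.
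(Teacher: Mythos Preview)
Your approach is correct. The converse matches the paper's, which likewise invokes Proposition~\ref{prop:converse} (equivalently \cite[Thm.~7.1]{schulman_1994}). For the achievability, however, the paper takes a considerably simpler route than the one you sketch: rather than tuning block sizes and spacings at every level, it modifies \emph{only} the level-$0$ density. Concretely, a node is assigned level $l$ iff its index is divisible by $c \cdot 4^l$ but not $c \cdot 4^{l+1}$, with $c = \lfloor 1/(48p) \rfloor$; the $3$-and-$4$ structure at levels $l \ge 1$ is left untouched. This already forces $\epsilon_1 \le 3(4pc)^2 \le 1/48$, after which Lemma~\ref{lem:pe_recursion} applies verbatim at all higher levels, and the propagation delay is at most $m + \sum_{l \ge 1} \frac{m}{c \cdot 4^l}\cdot 3^l \le m(1 + 3/c) = m(1 + \Theta(p))$.

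Your scheme---with geometrically growing $B_l$ and doubly-exponentially growing $K_l$---also works, and your identification of the main obstacle (simultaneous compatibility of $c$, $B_1$, $\beta$ so that $B_l/K_l < 1$ from $l=2$ onward) is accurate and resolvable as you indicate. What your construction buys is only $L = O(\log\log m)$ levels rather than $\Theta(\log m)$, which is irrelevant to the present theorem but could matter in a setting with per-level overheads. What the paper's approach buys is that once $\epsilon_1 \le 1/48$ is secured, the entire higher-level analysis is already encapsulated in Lemma~\ref{lem:pe_recursion}, so the low-noise argument collapses to a three-line calculation with no new parameter bookkeeping at all.
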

    
    This scaling can be contrasted with the channel capacity of the BSC, which scales as $1 - \Theta\big(p \log \frac{1}{p}\big)$ as $p \to 0$.  Hence, at least in some cases, the information velocity can be strictly higher than the capacity. 
    
    The converse part of Theorem \ref{thm:low_noise} follows immediately from \cite[Thm.~7.1]{schulman_1994} (see also Proposition \ref{prop:converse} below), so it remains to prove the achievability part.  In the form given, Theorem \ref{thm:main_1bit} gives a bound on $v(p)$ that fails to decrease as we decrease $p$ below $\frac{1}{48}$.  However, we can capture the dependence on $p$ by simply letting a higher fraction of nodes (depending on $p$) have level $0$.  More specifically, we fix $c \ge 1$ and re-assign the levels as follows:
    \begin{itemize}
        \item A node has level $l$ if its index is a multiple of $c\cdot 4^l$ but not a multiple of $c \cdot 4^{l+1}$. If there is no such $l$, we set the level to 0.
        \item Apart from changing the levels, the rest of the protocol remains unchanged.
    \end{itemize}
    We proceed to bound the level-$1$ error probability $\epsilon_1$. The distance between consecutive level-1 nodes is $4c$, so each bit sent from a level-1 node to the next one is flipped with probability at most $4pc$.
    Since the level-1 nodes take a majority of 3, the error rate satisfies $\epsilon_1 \le 3(4pc)^2$, and by choosing $c = \big \lfloor \frac{1}{48p}\big\rfloor$, we obtain $\epsilon_1 \leq \frac{1}{48}$. Since the inequality $\epsilon_{l+1} \leq 3\cdot (4\epsilon_l)^2$ still holds (with the same proof as before), we can again obtain Lemma \ref{lem:pe_recursion}, and recursively apply it for all of the higher levels.
    
    To calculate the total transmission time $n(m,p)$, observe that for each $l\geq 1$, there are at most $\frac{m}{c4^l}$ nodes of level $l$. Therefore,
    \begin{multline}
        n(m,p) \leq m + \sum_{l\geq1} \frac{m}{c\cdot 4^l} \cdot 3^l \leq m + \frac{3m}{c} = m\Big(1+\frac{3}{c}\Big) \\
            = m(1+\Theta(p)),
    \end{multline}
    where we used the fact that $c = \Theta\big(\frac{1}{p}\big)$.  It then follows from the definition of $v(p)$ in \eqref{eq:iv} that $v(p) \geq 1 - \Theta(p)$ as $p \to 0$, completing the proof of Theorem \ref{thm:low_noise}.

    
    \subsection{High Noise}
    
    To study the high-noise asymptotics, we consider $p = \frac{1-\delta}{2}$ (i.e., $\delta = 1-2p$) for small $\delta > 0$.
    
    \begin{theorem} \label{thm:high_noise}
        The information velocity satisfies $v\big(\frac{1-\delta}{2}\big) = \Theta(\delta^2)$ as $\delta \to 0$.
    \end{theorem}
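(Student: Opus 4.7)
The proof decomposes into an achievability bound $v\big(\frac{1-\delta}{2}\big) = \Omega(\delta^2)$ and a converse bound $v\big(\frac{1-\delta}{2}\big) = \calO(\delta^2)$, which together give the claimed scaling.

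For the achievability, the plan is to reduce to Theorem \ref{thm:main_1bit} via a repetition preprocessing, similar to the explicit bound already sketched in Section \ref{sec:1bit_var}. Fix a window length $r = \big\lceil \frac{2 \log 48}{\delta^2} \big\rceil$ and modify the protocol so that every transmitting node emits each of its output bits $r$ times, while every receiving node collects $r$ consecutive bits on each incoming channel and takes their majority vote to form a single ``effective'' received bit. By Hoeffding's inequality, the probability that the majority disagrees with the intended bit is at most $\exp\big(-\frac{r \delta^2}{2}\big) \le \frac{1}{48}$, so the induced effective channel is a BSC with crossover at most $\frac{1}{48}$. Applying Theorem \ref{thm:main_1bit} to this effective channel yields a protocol using $\calO(m)$ effective transmissions, which corresponds to $\calO(m r) = \calO(m / \delta^2)$ actual channel uses. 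Taking $m \to \infty$ in \eqref{eq:iv} gives $v\big(\frac{1-\delta}{2}\big) \ge c\, \delta^2$ for a universal constant $c > 0$.

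For the converse direction, I would appeal directly to Proposition \ref{prop:converse} (stated later in the paper), which asserts $v(p) \le (1 - 2p)^2 = \delta^2$ for all $p \in \big(0,\tfrac{1}{2}\big)$. Combined with the achievability above, this pins down $v\big(\frac{1-\delta}{2}\big) = \Theta(\delta^2)$.

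The achievability step is essentially bookkeeping on top of Theorem \ref{thm:main_1bit} and Hoeffding, so the real technical content lives in the converse, i.e., in Proposition \ref{prop:converse}. The naive cut-set bound $v(p) \le 1 - H(p)$ already gives $\Theta(\delta^2)$ scaling in the high-noise limit, so it would suffice for the asymptotic statement of Theorem \ref{thm:high_noise}; however, obtaining the clean closed form $(1-2p)^2$ requires a strong data-processing style argument that tracks the contraction of the $\chi^2$-divergence (or total variation) between the conditional distributions of the received bit stream given $\Theta = 0$ versus $\Theta = 1$, exploiting the fact that each BSC hop contracts $\chi^2$-divergence by a factor of at most $(1-2p)^2$, and then combining this with a Fano-type argument that a constant divergence at the decoder is necessary for constant error probability. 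I expect this SDPI step to be the main difficulty, and I would defer its details to the proof of Proposition \ref{prop:converse}.
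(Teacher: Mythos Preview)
Your main argument is correct and matches the paper exactly: Hoeffding plus a length-$\Theta(\delta^{-2})$ repetition code reduces the effective crossover below $\frac{1}{48}$, after which Theorem~\ref{thm:main_1bit} gives the $\Omega(\delta^2)$ achievability; the converse is Proposition~\ref{prop:converse}.

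One point to correct in your commentary: the assertion that a ``naive cut-set bound'' already gives $v(p) \le 1 - H(p)$, and hence $\Theta(\delta^2)$, is not justified. A single-cut Fano argument only yields a constant lower bound on $n$, not one linear in $m$, so it says nothing about $v(p)$. The paper in fact remarks that the pre-existing converse of \cite[Thm.~7.1]{schulman_1994} gives only $\Theta(\delta)$ scaling in this regime; obtaining $\delta^2$ is precisely the new content of Proposition~\ref{prop:converse}. So the converse here is not a throwaway step.

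Regarding your sketch of that proposition: the paper's argument is in the SDPI family you describe, but it tracks mutual information rather than $\chi^2$-divergence between conditional laws. It sets $f(i,j) = I(\Theta; Y_{1,j-1},\dotsc,Y_{i,j-1})$, uses the BSC SDPI (Lemma~\ref{lem:sdpi}) together with the chain rule and data processing to obtain the recursion $f(i,j) \le (1-\delta^2) f(i-1,j) + \delta^2 f(i-1,j-1)$, and then shows via an exponential barrier function (Lemma~\ref{lem_gamma}) that $f(n,\lfloor v_0 n\rfloor) \to 0$ for every $v_0 > \delta^2$. A $\chi^2$-contraction argument could plausibly be made to work as well, but the bookkeeping of how the divergence evolves jointly in $i$ and $j$ is the real crux, not the choice of divergence.
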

    
    We observe that in this case, the information velocity and the channel capacity exhibit the same high-noise asymptotics; the latter is also easily shown to be $\Theta(\delta^2)$.
    
    As noted in Section \ref{sec:1bit_var}, we can use a repetition code to bring down the effective crossover probability below $\frac{1}{48}$.  To make this more precise, we consider Hoeffding's inequality \cite[Ch.~2]{Bou13}, which gives that for i.i.d.~random variables $Z_1,\dotsc,Z_N$ distributed as ${\rm Bernoulli}(p)$, we have
    \begin{equation}
        \Pr\bigg( \sum_{i=1}^N (Z_i - p) \ge N\epsilon \bigg) \le e^{-2N\epsilon^2}. \label{eq:Hoeffding}
    \end{equation}
    In our setting, the relevant choice is $\epsilon = \frac{1}{2} - p = \frac{\delta}{2}$, since we are interested in whether $\sum_{i=1}^N Z_i$ exceeds $\frac{N}{2}$.  We find that to make the right-hand side of \eqref{eq:Hoeffding} at most $\frac{1}{48}$, we can set the number of repetitions to $N = \Theta(\epsilon^{-2}) = \Theta(\delta^{-2})$.  Thus, the block length is blown up by a factor of $\Theta(\delta^{-2})$, and the achievability part of Theorem \ref{thm:high_noise} follows.
    
    While the converse in \cite[Thm.~7.1]{schulman_1994} can also be applied to the high-noise setting, it only gives a suboptimal converse with $\Theta(\delta)$ dependence.  We proceed by giving a tighter analysis attaining the desired $\Theta(\delta^2)$ dependence, which we formally state as follows.
    
    \begin{proposition} \label{prop:converse}
        For any $\delta \in (0,1)$, we have $v\big(\frac{1-\delta}{2}\big) \leq \delta^2$.
    \end{proposition}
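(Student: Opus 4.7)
To prove $v\big(\tfrac{1-\delta}{2}\big)\le\delta^{2}$, the plan is to apply a strong data processing inequality (SDPI) iteratively across the $m$ hops, showing that any protocol with error at most $\tfrac{1}{3}$ must use at least $n\ge m/\delta^{2}\cdot(1-o(1))$ time steps. The starting point is a divergence lower bound at the decoder: since the decoder's observation distributions $P_{0},P_{1}$ under $\Theta=0$ versus $\Theta=1$ must satisfy $\|P_{0}-P_{1}\|_{TV}\ge\tfrac{1}{3}$, a Cauchy--Schwarz-type inequality such as $|P_{0}(A)-P_{1}(A)|^{2}\le\chi^{2}(P_{0}\|P_{1})$ yields $\chi^{2}(P_{0}\|P_{1})\ge\tfrac{1}{9}$ at the decoder.

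Next, I would iterate the $\chi^{2}$ SDPI over the $m$ hops. For a single $\mathrm{BSC}(p)$ use, the $\chi^{2}$ SDPI constant equals $(1-2p)^{2}=\delta^{2}$; combined with the deterministic (causal) processing at each relay, this should give contraction by $\delta^{2}$ per hop, yielding the iterated bound $\chi^{2}_{\mathrm{decoder}}\le\delta^{2(m-1)}\,\chi^{2}_{\mathrm{first\ hop}}$. The $\chi^{2}$ at the first hop's output can then be bounded by $e^{\calO(n\delta^{2})}$ via tensorization of $\chi^{2}$ for product distributions (since the two possible input strings from the encoder differ in at most $n$ coordinates, each contributing per-coordinate $\chi^{2}\le 4\delta^{2}/(1-\delta^{2})$). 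Combining the three bounds and taking logarithms gives $n\ge m/\delta^{2}\cdot(1-o(1))$, and hence $v(p)\le\delta^{2}$.

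The main obstacle will be making the per-hop contraction rigorous: the product channel $\mathrm{BSC}(p)^{\otimes n}$ viewed as a single channel does \emph{not} have $\chi^{2}$ SDPI constant $\delta^{2}$ for arbitrary correlated input distributions (the constant can degrade to $1$, as one sees by considering inputs of the form $(b,b,\dots,b)$ for a single Bernoulli seed $b$). The causal structure of the relays---where the $i$-th output bit depends only on the first $i-1$ received bits---must be exploited, likely via a chain-rule decomposition of the divergence into conditional per-time-step contributions, each of which enjoys the $\delta^{2}$ SDPI of a single BSC use. A clean way to carry this out may be to track a conditional $\chi^{2}$-like quantity that captures the ``posterior gap'' contributed by each new received bit, and show that this gap contracts appropriately when pushed through an additional hop.
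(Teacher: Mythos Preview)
Your central inequality---a per-hop contraction $\chi^{2}_{\text{node }j+1}\le\delta^{2}\,\chi^{2}_{\text{node }j}$---is false, and in fact if it held it would contradict the achievability part of Theorem~\ref{thm:high_noise}. Combining it with your first-hop bound $\chi^{2}_{\text{first hop}}\le e^{Cn\delta^{2}}$ and $\chi^{2}_{\text{decoder}}\ge 1/9$ would yield $Cn\delta^{2}\ge 2(m-1)|\log\delta|-\log 9$, i.e.\ $v(p)\le C\delta^{2}/(2|\log\delta|)=o(\delta^{2})$. A concrete counterexample to the contraction: let node~$j$ take a majority vote of its $n$ received bits and repeat the result $n$ times. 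Then the $\chi^{2}$ at node $j{+}1$ is essentially unchanged from that at node~$j$ (both are $e^{\Theta(n\delta^{2})}$), not shrunk by $\delta^{2}$. The causal structure alone does not rescue this; what matters is that each \emph{time step} delivers at most a $\delta^{2}$-fraction of the spatial information gap, not that each \emph{hop} does.

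The paper's proof captures exactly this distinction. It works with mutual information $f(i,j)=I(\Theta;Y_{1,j-1},\dots,Y_{i,j-1})$ indexed by \emph{both} time $i$ and node $j$, applies the scalar SDPI to the single new bit $Y_{i,j-1}$ conditioned on the past, and uses the chain rule to obtain the two-dimensional recursion
\[
f(i,j)\le (1-\delta^{2})f(i-1,j)+\delta^{2}f(i-1,j-1).
\]
This is a convex combination, not a contraction, and it encodes ``information moves one node to the right with probability $\delta^{2}$ per time step''; an exponential supersolution $e^{c(\gamma i-j)}$ then gives $f(n,m)\to 0$ whenever $m/n>\delta^{2}$. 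Your last paragraph gestures at a per-time-step decomposition, which is the right instinct, but the missing piece is this two-index recursion. Note also that the clean additive chain rule for $I$ is what makes the argument go through; $\chi^{2}$ has no comparably clean conditional decomposition, so even after identifying the $2$D structure you would likely need to switch to mutual information (or KL) to finish.
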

    
    To prove this proposition, we make use of the {\em strong data processing inequality}, which is now well-known (e.g., see \cite{ahlswede1976spreading,polyanskiy2017strong}) and is stated as follows.
    
    \begin{lemma} \label{lem:sdpi}
        {\em \cite{ahlswede1976spreading}}
        Let $\Theta,Z$ be binary random variables, and let $Z'$ be the output upon passing $Z$ over ${\rm BSC}(\frac{1-\delta}{2})$, with the Markov chain relation $\Theta \to Z \to Z'$. Then, we have
        \begin{equation}
            I(\Theta;Z') \leq \delta^2 I(\Theta; Z).
        \end{equation}
    \end{lemma}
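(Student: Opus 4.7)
The plan is to reduce the claimed mutual-information inequality to a pointwise KL-contraction inequality for binary distributions, and then verify that pointwise inequality by a short one-dimensional convexity argument.

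First, decompose the two mutual informations as $I(\Theta;Z) = \e_\Theta\bigl[D(P_{Z|\Theta}\,\|\,P_Z)\bigr]$ and $I(\Theta;Z') = \e_\Theta\bigl[D(P_{Z'|\Theta}\,\|\,P_{Z'})\bigr]$. Because of the Markov relation $\Theta \to Z \to Z'$, each conditional law $P_{Z'|\Theta=\theta}$ as well as the marginal $P_{Z'}$ is obtained by passing the corresponding $Z$-law through the same $\mathrm{BSC}\bigl(\tfrac{1-\delta}{2}\bigr)$. Hence it suffices to prove the pointwise statement that for any two binary distributions $P,Q$ with images $P',Q'$ under the channel,
\begin{equation}
    D(P'\,\|\,Q') \;\le\; \delta^2\, D(P\,\|\,Q). \label{eq:sdpi_pointwise}
\end{equation}
Averaging \eqref{eq:sdpi_pointwise} with $P = P_{Z|\Theta=\theta}$ and $Q = P_Z$ over $\theta\sim P_\Theta$ then yields the lemma.

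Next, parametrize $P=(a,1-a)$, $Q=(b,1-b)$ and set $q=\tfrac{1-\delta}{2}$, so that $P'=(q+a\delta,\,1-q-a\delta)$ and $Q'=(q+b\delta,\,1-q-b\delta)$. Define
\begin{equation}
    f(a,b)=\delta^2 D(P\,\|\,Q)-D(P'\,\|\,Q'),
\end{equation}
fix $b\in(0,1)$, and treat $f$ as a function of $a$. Using $\frac{d}{dx}\bigl[x\log x + (1-x)\log(1-x)\bigr]=\log\frac{x}{1-x}$, a direct computation gives $f(b,b)=0$ and $\partial_a f(b,b)=0$, so it will suffice to show $\partial_a^2 f(a,b)\ge 0$ for all $a\in(0,1)$; convexity in $a$ together with the vanishing first derivative at $a=b$ then forces $f(a,b)\ge 0$.

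The crux of the argument is the second-derivative computation
\begin{equation}
    \partial_a^2 f(a,b)=\delta^2\Bigl(\tfrac{1}{a(1-a)}-\tfrac{1}{(q+a\delta)(1-q-a\delta)}\Bigr),
\end{equation}
which is nonnegative iff $(q+a\delta)(1-q-a\delta)\ge a(1-a)$. Expanding and using the identity $q(1-q)=\tfrac{1-\delta^2}{4}$ yields
\begin{equation}
    (q+a\delta)(1-q-a\delta)-a(1-a)=(1-\delta^2)\bigl(\tfrac14-a(1-a)\bigr)\ge 0,
\end{equation}
since $a(1-a)\le\tfrac14$. This algebraic verification is the sole real obstacle in the proof: the reduction to the binary pointwise SDPI, the convexity setup, and the averaging over $\theta$ are all routine once this inequality is in hand.
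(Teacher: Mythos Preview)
Your argument is correct. The reduction of $I(\Theta;Z')\le\delta^2 I(\Theta;Z)$ to the pointwise KL contraction \eqref{eq:sdpi_pointwise} is valid because both mutual informations decompose as averages of $D(\cdot\|\cdot)$ over $\theta$, and the Markov chain ensures that $P_{Z'|\Theta=\theta}$ and $P_{Z'}$ are obtained from $P_{Z|\Theta=\theta}$ and $P_Z$ through the same channel. Your second-derivative computation and the algebraic identity $(q+a\delta)(1-q-a\delta)-a(1-a)=(1-\delta^2)\bigl(\tfrac14-a(1-a)\bigr)$ both check out, and the convexity-plus-critical-point reasoning gives $f(a,b)\ge f(b,b)=0$ on $(0,1)$; the boundary cases $a\in\{0,1\}$ or $b\in\{0,1\}$ follow by continuity or are trivial (with $D(P\|Q)=\infty$).

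Regarding the comparison: the paper does not prove this lemma at all. It is stated as a known strong data processing inequality and attributed to Ahlswede and G\'acs \cite{ahlswede1976spreading}, with a further pointer to \cite{polyanskiy2017strong}. So there is no ``paper's proof'' to match; you have supplied an elementary self-contained argument where the paper simply invokes the literature. Your route---reducing to the binary KL contraction coefficient and verifying it via a one-variable convexity computation---is a standard way to establish that the BSC contraction coefficient equals $(1-2p)^2=\delta^2$, and is perfectly adequate here. The cited references obtain the same constant by more general machinery (characterizing contraction coefficients for arbitrary channels), which is overkill for the binary-input binary-output case you need.
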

    
    Fix a time index $i \ge 1$ and node index $j \ge 1$, and let $\vec{Y} = (Y_{1,j-1}, Y_{2,j-1}, \ldots, Y_{i-1,j-1}) \in \{0,1\}^{i-1}$ be the information received by node $j$ before time $i$.  Since $Y_{i,j-1}$ is the output of $X_{i,j-1}$ through ${\rm BSC}(\frac{1-\delta}{2})$, Lemma \ref{lem:sdpi} gives
    \begin{equation}
        I(\Theta; Y_{i,j-1} | \vec{Y}) \leq \delta^2 I(\Theta; X_{i,j-1}| \vec{Y}). \label{eq:apply_sdpi}
    \end{equation}
    Note that the Markov chain assumption in Lemma \ref{lem:sdpi} holds because conditioned on $\vec{Y}$, we have $\Theta \to X_{i,j-1} \to Y_{i,j-1}$ (due to the nodes' transmissions only depending on past received symbols).
    
    Define the function
    \begin{equation}
        f(i, j) = 
        \begin{cases} I(\Theta; (Y_{1,j-1}, Y_{2,j-1}, \ldots, Y_{i,j-1})),  & i\geq 1, j\geq 1\\
            1 & j=0\\
            0 & i=0, j\neq 0,
        \end{cases}
        \label{eq:f_def}
    \end{equation}
    which represents the amount of information node $j$ has received about $\Theta$ up to time $i$; the second and third cases hold because only node $0$ is given $\Theta$ before transmission starts.  We follow the high-level idea from \cite{schulman_1994} of setting up a recursion relation for $f$, but with rather different details.
    
    
    Considering the indices $i,j \geq 1$ that we used to define $\vec{Y}$, the chain rule for mutual information gives
    \begin{multline}
        f(i, j) = I(\Theta; \vec{Y}, Y_{i,j-1}) = I(\Theta; \vec{Y}) + I(\Theta; Y_{i, j-1}|\vec{Y}) \\ \leq f(i-1, j) + \delta^2 I(\Theta; X_{i, j-1}|\vec{Y}), \label{eq:f_rec}
    \end{multline}
    where the first term follows from the definition of $f$, and the second term follows from \eqref{eq:apply_sdpi}.
    
    To bound the second term in \eqref{eq:f_rec} in terms of $f$, we use another application of the chain rule:
    \begin{align}
        I(\Theta; X_{i,j-1}|\vec{Y}) &= I(\Theta;\vec{Y}, X_{i,j-1}) - I(\Theta;\vec{Y}) \nonumber \\ &\leq f(i-1, j-1) - f(i-1, j), \label{eq:to_sub}
    \end{align}
    where the term $f(i-1, j-1)$ arises differently depending on the value of $j$:
    \begin{itemize}
        \item If $j=1$, then $f(i-1, j-1) = 1$, which trivially upper bounds $I(\Theta;\vec{Y}, X_{i,j-1})$ as desired.
        \item If $j \ge 2$, then $f(i-1,j-1) = I(\Theta; (Y_{1,j-2}, Y_{2,j-2}, \ldots, Y_{i-1,j-2}))$; in this case, we use the data processing inequality, observing that we have the Markov chain relation $ \Theta \rightarrow (Y_{1,j-2}, Y_{2,j-2}, \ldots, Y_{i-1, j-2}) \rightarrow (\vec{Y}, X_{i, j-1})$.
    \end{itemize}
    Substituting \eqref{eq:to_sub} back into \eqref{eq:f_rec} gives
    \begin{equation} \label{prop_rate}
        f(i,j) \leq f(i-1, j) + \delta^2(f(i-1, j-1) - f(i-1, j)).
    \end{equation}
    Intuitively, this means that the propagation of information through time (represented by $f(i,j)-f(i-1,j)$) is only $\delta^2$ times as fast as the propagation of information through space (represented by $f(i-1, j-1) - f(i-1, j)$).  To formalize this intuition, we use the following technical lemma.
    
    \begin{lemma} \label{lem_gamma}
        Fix $\gamma > \delta^2$, and suppose that 
        \begin{equation} \label{eqn_c}
            e^{c\gamma} \geq 1 + \delta^2(e^c-1)
        \end{equation}
        for some $c>0$.  Then, we have for all $(i,j)$ that
        \begin{equation} \label{exp_bound}
            f(i,j) \leq e^{c(\gamma i - j)}.
        \end{equation}
    \end{lemma}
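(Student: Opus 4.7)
\textbf{Proof plan for Lemma \ref{lem_gamma}.} The plan is to prove the bound \eqref{exp_bound} by induction on $i + j$, exploiting the recursion \eqref{prop_rate} after first rewriting it as a convex combination. Specifically, since the recursion can be expressed as
\begin{equation}
    f(i,j) \leq (1-\delta^2) f(i-1,j) + \delta^2 f(i-1,j-1),
\end{equation}
and both coefficients $1-\delta^2$ and $\delta^2$ are non-negative (as $\delta \in (0,1)$), any upper bounds on $f(i-1,j)$ and $f(i-1,j-1)$ can be directly plugged in.

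For the base cases, I would handle the two boundary rows prescribed by \eqref{eq:f_def}. When $j=0$, we have $f(i,0)=1$, and since $c>0$ and $\gamma>\delta^2>0$, the right-hand side $e^{c\gamma i}\ge 1$ for all $i\ge 0$. When $i=0$ and $j\ge 1$, we have $f(0,j)=0$, which is trivially upper-bounded by $e^{-cj}>0$. These together cover the $i+j\le \max(i,j)$ boundary of the lattice, giving a valid starting point for the induction.

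For the inductive step at $(i,j)$ with $i,j\ge 1$, assuming the bound holds at $(i-1,j)$ and $(i-1,j-1)$, I would compute
\begin{align}
    f(i,j) &\le (1-\delta^2)\,e^{c(\gamma(i-1)-j)} + \delta^2\,e^{c(\gamma(i-1)-j+1)} \nonumber \\
    &= e^{c(\gamma(i-1)-j)}\bigl[(1-\delta^2)+\delta^2 e^{c}\bigr] \nonumber \\
    &= e^{c(\gamma(i-1)-j)}\bigl[1+\delta^2(e^{c}-1)\bigr] \nonumber \\
    &\le e^{c(\gamma(i-1)-j)} \cdot e^{c\gamma} = e^{c(\gamma i-j)},
\end{align}
where the last inequality is exactly the hypothesized condition \eqref{eqn_c}. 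This closes the induction.

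I do not anticipate a serious obstacle: the recursion \eqref{prop_rate} has been reduced to an elementary linear inequality, and the content of \eqref{eqn_c} is precisely what is needed to make the exponential ansatz invariant under one step of the recursion. The only mild care point is verifying that the coefficients in the recursion are non-negative (so that monotone substitution of upper bounds is legitimate), and that the two boundary rows in \eqref{eq:f_def} provide enough base-case data to launch the induction into the interior $i,j\ge 1$.
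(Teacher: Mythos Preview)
Your proof is correct and follows essentially the same approach as the paper: induction with the boundary cases $i=0$ or $j=0$ read off from \eqref{eq:f_def}, and the inductive step obtained by substituting the hypothesis into \eqref{prop_rate} and applying \eqref{eqn_c}. Your explicit rewriting of \eqref{prop_rate} as a convex combination is exactly the paper's remark that the coefficient of $f(i-1,j)$ is overall positive, and your choice to induct on $i+j$ (rather than, say, on $i$ alone) is an inessential variation.
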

    \begin{proof}
        We proceed by induction.  When $i=0$ or $j=0$, it is immediate from the cases in \eqref{eq:f_def} that \eqref{exp_bound} is satisfied.  If $i,j \geq 1$, then by \eqref{prop_rate} and the induction hypothesis, we have 
        \begin{align}
            f(i,j) &\leq f(i-1, j) + \delta^2(f(i-1, j-1) - f(i-1, j)) \nonumber \\
            &\leq e^{c(\gamma (i-1) - j)} + \delta^2(e^{c(\gamma (i-1) - (j-1))} - e^{c(\gamma (i-1) - j)}), \label{eq:f_expanded}
        \end{align}
        where we note that the coefficients of $1$ and $-\delta^2$ combine to produce an overall positive coefficient to $f(i-1, j)$.
        
        Finally, substituting \eqref{eqn_c} into \eqref{eq:f_expanded} yields the desired upper bound on $f(i,j)$:
        \begin{multline}
            e^{c(\gamma (i-1) - j)} + \delta^2(e^{c(\gamma (i-1) - (j-1))} - e^{c(\gamma (i-1) - j)}) \\ = e^{c (\gamma (i-1) - j)}(1+\delta^2(e^c-1)) \leq e^{c (\gamma i-j)}.
        \end{multline}
    \end{proof}
    
    Observe that \eqref{eqn_c} is equivalent to
    \begin{equation}
        \frac{e^{c\gamma}-1}{e^c-1} \geq \delta^2,
    \end{equation}
    and since $\lim_{c \rightarrow 0} \frac{e^{c\gamma}-1}{e^c-1} = \gamma$, there always exists some $c>0$ satisfying this requirement when $\gamma > \delta^2$.  Fix any constant $v_0 > \delta^2$, and choose $\gamma \in (\delta^2,v_0)$, and a corresponding value of $c$ satisfying \eqref{eqn_c}.  We observe that 
    \begin{equation}
        \lim_{i\rightarrow \infty} f(i, \lfloor v_0\cdot i\rfloor) \leq \lim_{i\rightarrow \infty} e^{c(\gamma i - v_0 i)} = 0, \label{eq:f_v0}
    \end{equation}
    where the inequality follows from Lemma \ref{lem_gamma}, and the the limit of zero holds since $c > 0$ and $v_0 > \gamma$.
    
    We can now complete the proof of Proposition \ref{prop:converse}.  The quantity $f(i, \lfloor v_0\cdot i\rfloor)$ represents the mutual information between $\Theta$ and the information received by node $\lfloor v_0\cdot i\rfloor$ up to time $i$.  By the data processing inequality, this means that any estimate $\hat{\Theta} \in \{0,1\}$ formed using this information also has $I(\Theta;\hat{\Theta}) \to 0$, which implies that $\Pr(\Theta \ne \hat{\Theta}) \to \frac{1}{2}$.  Since $v_0$ is arbitrarily close to $\delta^2$ and represents the ratio of the node index to the time index in \eqref{eq:f_v0}, it follows that $v(\frac{1-\delta}{2}) \leq \delta^2$ as claimed.
    
    \section{Multi-Bit Variant} \label{sec:multi_bit}
    
    In this section, we provide a multi-bit variant of our protocol that can transmit a $k$-bit message across $m$ nodes with $\mathcal{O}(m+k)$ transmission time and asymptotically vanishing error probability.  We continue with the idea of arranging nodes into levels, but the arrangements are more complex, and we need to move beyond simple majority-of-3 coding.  For the latter purpose, the following technical lemma is useful.
    
    \begin{lemma} \label{lem:error_correcting}
        Suppose that we have a message consisting of $b \ge 1$ blocks, each having $k'$ bits.  Then there exists an encoding-decoding scheme such that when we transmit these $b$ blocks, followed by $\lceil \log_2 b \rceil + 1$ suitably-chosen additional blocks (also having $k'$ bits each), the original message can be recovered whenever at most 1 block is adversarially modified.  We will refer to the additional $\lceil \log_2 b \rceil + 1$ blocks as the redundancy blocks.
    \end{lemma}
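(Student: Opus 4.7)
The plan is to treat each $k'$-bit block as a symbol in the alphabet $\{0,1\}^{k'}$ and build a single-block-error-correcting systematic code, mimicking the classical Hamming code but operating at the block level rather than the bit level. I would set $r = \lceil \log_2 b \rceil + 1$ and $N = b+r$, and assign to each of the $N$ transmitted positions a distinct nonzero ``address'' $h \in \mathbb{F}_2^{r}$, with the $r$ redundancy positions receiving the standard basis vectors $e_1,\dotsc,e_r$ and the $b$ message positions receiving any choice of $b$ further distinct nonzero elements of $\mathbb{F}_2^{r}$.

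Next I would define the $s$-th redundancy block (for $s = 1,\dotsc,r$) to be the bitwise XOR over $\{0,1\}^{k'}$ of all message blocks whose address has bit $s$ equal to $1$; equivalently, the codeword $(B_1,\dotsc,B_N)$ is required to satisfy the $r$ parity-check relations $\bigoplus_{i \,:\, h_i\text{ has bit }s\text{ set}} B_i = 0$. Because the $s$-th redundancy position is assigned the address $e_s$, these relations solve uniquely for the redundancy blocks as XORs of subsets of the message blocks, so the encoder is well defined and systematic.

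For decoding, the receiver computes the syndromes $S_s \in \{0,1\}^{k'}$ by applying the same XOR pattern to the \emph{received} blocks. If exactly one transmitted block, at the position with address $h^\star$, has been modified by an additive error $\Delta$, then $S_s = \Delta$ whenever bit $s$ of $h^\star$ equals $1$ and $S_s = 0$ otherwise, while in the error-free case all syndromes vanish. Since the addresses are distinct and nonzero, the binary pattern of zero/nonzero entries in $(S_1,\dotsc,S_r)$ uniquely identifies $h^\star$, and the common value of the nonzero syndromes yields $\Delta$; the corrupted block is then either undone by XORing with $\Delta$ (if it is a message block) or ignored (if it is a redundancy block), and in both cases the original message is recovered.

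The step I expect to be most delicate is the counting inequality $N = b + r \le 2^{r} - 1$ that guarantees the existence of enough distinct nonzero addresses in $\mathbb{F}_2^{r}$; this follows from the choice of $r$ for all but the smallest values of $b$, and the remaining exceptions (essentially $b \in \{1,2\}$) can be absorbed by an ad hoc direct construction or by permitting one extra parity block, neither of which affects the statement of the lemma up to constants.
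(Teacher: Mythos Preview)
Your proof is correct and takes essentially the same approach as the paper: both use a systematic Hamming code applied coordinate-wise across the $k'$ bit positions, with the small cases $b\in\{1,2\}$ handled separately. The only organizational difference is that the paper first picks the Hamming parameter $r$ so that $b \le 2^r - r - 1$ (zero-padding if necessary) and then checks $r \le \lceil \log_2 b \rceil + 1$ via a brief case analysis, whereas you fix $r = \lceil \log_2 b \rceil + 1$ up front and verify the equivalent address-counting inequality $b + r \le 2^r - 1$.
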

    \begin{proof}
        The case $b=1$ is trivial, and when $b=2$ we can add a single redundancy block equal to the bit-wise XOR of the first two blocks.\footnote{When the lemma statement dictates us to add more blocks than necessary, we can simply add dummy blocks of all zeros.}  In the following, we assume that $b \ge 3$.
        
        When $b$ is of the form $2^r - r -1$ for some integer $r \ge 3$, we use a basic Hamming code with $r$ redundant bits \cite[Sec.~13.3]{mackay2003information}; it is applied bit-by-bit $k'$ times.  
        Otherwise, choose $r$ such that $2^{r-1}-(r-1)-1 < b < 2^r-r-1$. We temporarily extend the $b$ bits to $2^r-r-1$ bits by padding additional zeros, encode this new sequence using the Hamming code, and then delete the bits corresponding to the padded zeros (since they can trivially be re-inserted by the decoder).
        
        We need to show that the added redundancy for each of the $k'$ invocations of the Hamming code is at most $\lceil \log_2 b \rceil + 1$.  To see this, we first consider the case that $b = 2^r - r -1$.  The choice $r=3$ gives $b=4$, which satisfies the requirement.  On the other hand, when $r \ge 4$, we have $b = 2^r - r -1 > 2^{r-1}$, which implies $r < \log_2 b + 1$.  Since $b = 2^r - r -1$ cannot be a power of two for integer-valued $r \ge 4$, we deduce that $r \le \lceil \log_2 b \rceil$.
        
        It remains to handle general values of $b$.  Since $2^3 - 3 - 1 = 4$ and $2^4 - 4 - 1 = 11$, the choice $b=3$ rounds up to $r=3$, and the choices $b \in \{5,...,10\}$ round up to $r = 4$, so all of them satisfy $r \le \lceil \log_2 b \rceil + 1$.   For $r \ge 4$, when we incrementally increase $b$ between $2^r - r -1$ and $2^{r+1} - (r+1) - 1$, we pass exactly one power of two (namely, $2^r$).  Before that crossing, the rounding up (with respect to $r$) simply increases the above-mentioned bound $r \le \lceil \log_2 b \rceil$ (holding at the left endpoint) to $r \le \lceil \log_2 b \rceil + 1$, which gives the desired result.
    \end{proof} 
    
    \subsection{A Recursive Encoder and Decoder} \label{sec:enc_dec}
    
    In this subsection, we momentarily disregard the multi-hop nature of our problem, and describe a general recursive encoding-decoding strategy for a message of arbitrary length $k$.  This strategy provides a constant-rate code for reliable communication over the BSC relying only on elementary operations and the Hamming code subroutine (Lemma \ref{lem:error_correcting}), and may be of independent interest.

    \begin{figure*} 
        \begin{centering}
            \includegraphics[width=0.8\textwidth]{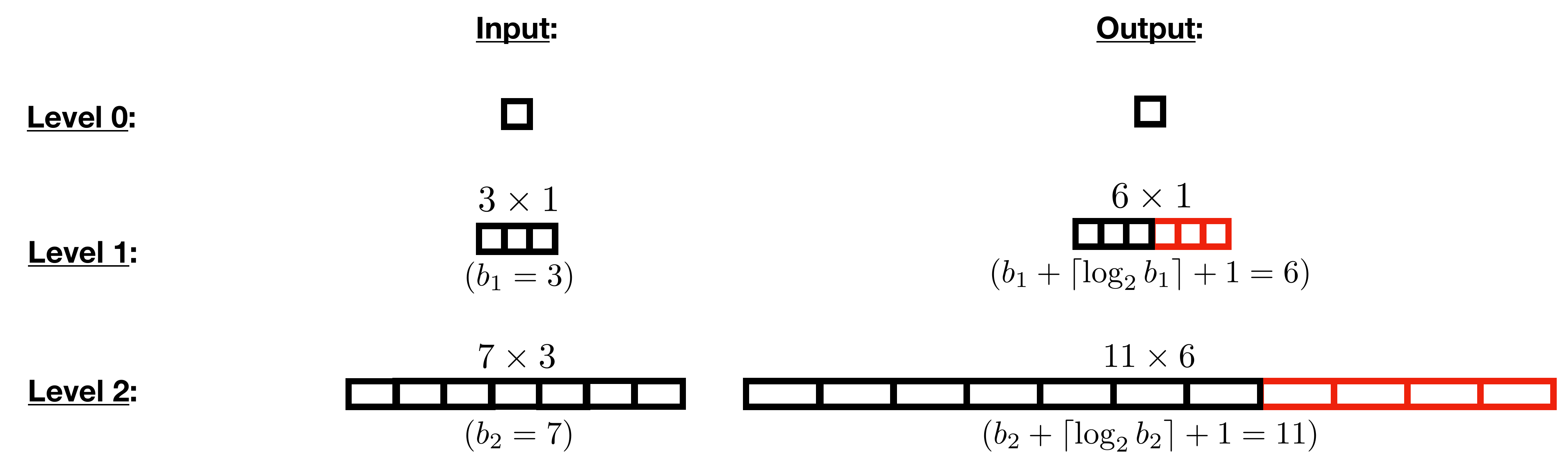}
            \par
        \end{centering}
        
        \caption{Illustration of our recursive multi-bit encoding scheme.  The additional blocks highlighted at the output correspond to added redundancy. \label{fig:MultiBit}}
    \end{figure*}
    
    Our strategy depends on a sequence $(b_1,b_2,\dotsc)$; this sequence is kept fully general for now.  
    The encoder is defined recursively according to levels indexed by $l=0,1,\dotsc,L$. The maximum level $L$ will be a function of $k$, and the levels $l=0,1,\dotsc,L-1$ will work with a smaller number of bits, denoted by $k_l < k$.  The encoders will also add redundancy, thus outputting some number $n_l$ of bits with $n_l \ge b_l$.  Specifically, we will have $k_0 = n_0 = 1$, and $k_l = \prod_{i=1}^l b_i$ and $n_l = \prod_{i=1}^l (b_i + \lceil \log_2 b_i \rceil + 1)$ for $l \ge 1$.
    
    The encoders are recursively defined as follows (see Figure \ref{fig:MultiBit} for an illustration):
    \begin{itemize}
        \item A level-$0$ encoder takes in 1 bit, and simply outputs that bit.  Hence, $k_0 = n_0 = 1$.
        \item For $l \ge 1$, a level-$l$ encoder takes in a block of $k_l = \prod_{i=1}^l b_i$ bits, breaks them up into $b_l$ blocks of length $k_{l-1} = \prod_{i=1}^{l-1} b_i$ each, and applies the level-$(l-1)$ encoding strategy to each block.  In addition, a further $\lceil \log_2 b_l \rceil + 1$ redundancy blocks are generated according to Lemma \ref{lem:error_correcting}, and they are also encoded using level-$(l-1)$ encoding.  The final output is a sequence of $n_l = \prod_{i=1}^l (b_i + \lceil \log_2 b_i \rceil + 1)$ bits.
    \end{itemize}
    The overall encoder of $k$ bits is not precisely any of those above, but makes use of the level-$L$ encoder, where $L$ is defined to be the value such that $k_L \le k < k_{L+1}$.  Specifically, the message of $k$ bits is divided into $\lceil k/k_L\rceil \le b_{L+1}$ blocks of $k_L$ bits each, padding the final block with zeros if necessary. The level-$L$ encoding strategy is then applied to each block. Note that padding the message at most doubles the message length.

    We similarly recursively define a sequence of decoders:
    \begin{itemize}
        \item A level-$0$ decoder simply reads the 1-bit input, and outputs it.
        \item For $l \ge 1$, a level-$l$ decoder reads $n_l = \prod_{i=1}^l (b_i + \lceil\log_2 b_i\rceil + 1)$ bits, and breaks them into $b_l + \lceil\log_2 b_l\rceil + 1$ blocks of size $n_{l-1} = \prod_{i=1}^{l-1} (b_i + \lceil\log_2 b_i\rceil + 1)$ each.  The level-($l-1$) decoding strategy is applied to each of these blocks (including the redundancy blocks).  Then, the resulting (decoded) blocks are further decoded according to Lemma \ref{lem:error_correcting}, on the assumption that at most one of the $b_l + \lceil\log_2 b_l\rceil + 1$ blocks has been corrupted.
    \end{itemize}
    The overall decoding strategy is to divide the $n$ received bits into $n_L$ blocks of size $n / n_L$ (divisibility holds by construction), and apply level-$L$ decoding on each one.

    \subsection{The Full Protocol}
    
    Our protocol depends not only on the sequence $(b_1,b_2,\dotsc)$ introduced above, but also on an additional sequence $(t_1,t_2,\dotsc)$.  It is useful to think of the 1-bit protocol from Section \ref{sec:protocol} as having $b_l = 3$ and $t_l = 4$ for all $l$ (though it does not append redundancy blocks).
    
    We first assign every node a level. Given $(t_1, t_2,\dotsc)$, the level of a node $j \in \{1,\dotsc,m-1\}$ is the largest integer $l$ such that $j$ is a multiple of $\prod_{i=1}^l t_i$ (we set $l=0$ if $j$ is not a multiple of $t_1$). As a result, between consecutive pairs of level-$l$ nodes, there are $t_l-1$ level-$(l-1)$ nodes. A level-$l$ node operates as follows:
    \begin{itemize}
        \item Read $n_l = \prod_{i=1}^l (b_i + \lceil\log_2 b_i\rceil + 1)$ bits;
        \item Decode these bits using the level-$l$ decoding strategy described above;
        \item Encode the result back using the level-$l$ encoding strategy;
        \item Transmit the resulting $n_l = \prod_{i=1}^l (b_i + \lceil\log_2 b_i\rceil + 1)$ bits to the next node.
    \end{itemize}
    This allocation produces some maximum level $L'$, which may differ from the maximum level $L$ in Section \ref{sec:enc_dec}.  In view of this difference, if $k < k_{L'}$, the encoder zero-pads the message to increase its length to $k_{L'}$; our analysis in Section \ref{sec:analysis} will be split according to whether this zero-padding is done or not.  Moreover, the decoder operates at level $\max\{L,L'\}$ regardless of divisibility by $\prod_{i=1}^l t_i$.
    
    
    As with our 1-bit protocol, the propagation of information is delayed, and each node ignores any ``non-specified'' bits before the first ``meaningful'' bit is received.
    
    We now state our main result for the multi-bit setting, which will be proved in Section \ref{sec:analysis}.
    
    \begin{theorem} \label{thm:main_multi}
        If $p < 3^{-8}/4$, then in the limit $m \to \infty$ (possibly with $k \to \infty$ simultaneously), the preceding protocol with $t_l = (l+2)^2$ and $b_l = \lfloor \frac{(l+2)^2}{4}\rfloor$ attains asymptotically vanishing error probability with a total transmission time of $n = \calO(m+k)$.
    \end{theorem}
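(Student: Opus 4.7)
The plan is to mirror the argument for Theorem \ref{thm:main_1bit} but adapted to the recursive Hamming-style scheme of Lemma \ref{lem:error_correcting}. Let $B_l = b_l + \lceil \log_2 b_l \rceil + 1$ denote the number of level-$(l-1)$ sub-blocks inside a level-$l$ block, and let $\epsilon_l$ denote the probability that a single level-$l$ block (of length $n_l$) is incorrectly decoded upon reaching the next consecutive level-$l$ node. Between two consecutive level-$l$ nodes, each sub-block travels through $t_l$ level-$(l-1)$ sub-hops, so a union bound gives per-sub-block failure probability at most $t_l\,\epsilon_{l-1}$. Since the decoder from Lemma \ref{lem:error_correcting} recovers correctly whenever at most one of the $B_l$ sub-blocks is corrupted, I obtain
\begin{equation}
\epsilon_l \;\le\; \binom{B_l}{2}\,(t_l\,\epsilon_{l-1})^2, \qquad \epsilon_0 = p. \label{eq:proposal_rec}
\end{equation}

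With $t_l = (l+2)^2$ and $b_l = \lfloor (l+2)^2/4 \rfloor$, one has $B_l = \Theta(l^2)$, so the prefactor $\binom{B_l}{2}\,t_l^2$ grows only polynomially in $l$. Writing $u_l = \log(1/\epsilon_l)$, \eqref{eq:proposal_rec} becomes $u_l \ge 2 u_{l-1} - O(\log l)$, whose solution satisfies $u_l \ge 2^l(u_0 - C)$ for an absolute constant $C$ coming from the convergent series $\sum_{i\ge 1} 2^{-i}\log i < \infty$. The hypothesis $p < 3^{-8}/4$ is tuned to make $u_0 = \log(1/p)$ exceed $C$, at which point $\epsilon_l$ decays doubly exponentially in $l$. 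Since the decoder applies a level-$\max(L,L')$ decoding to at most $\lceil k/k_L\rceil$ top-level blocks, a final union bound shows the overall error probability is at most a polynomial-in-$k$ factor times $\epsilon_{\max(L,L')}$, which tends to $0$ as $m,k\to\infty$ because $\max(L,L')$ grows without bound while $\epsilon_l$ decays doubly exponentially.

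For the transmission time, I would split the bound into transmission delay and propagation delay, as in Section \ref{sec:1bit_analysis}. The transmission delay is $\lceil k/k_{L'}\rceil\,n_{L'}$, and the rate overhead
\[
\frac{n_{L'}}{k_{L'}} \;=\; \prod_{i=1}^{L'}\Big(1 + \frac{\lceil\log_2 b_i\rceil+1}{b_i}\Big)
\]
is bounded uniformly in $L'$ by the convergent infinite product $\prod_{i\ge 1}(1 + O(\log i / i^2))$, so the transmission delay is $O(k)$. For the propagation delay, each level-$l$ relay waits for $n_l$ bits before transmitting, and there are at most $m/\prod_{i=1}^l t_i$ such relays, giving total wait
\[
\sum_{l\ge 1}\frac{m}{\prod_{i=1}^l t_i}\,n_l \;=\; m\sum_{l\ge 1}\prod_{i=1}^l \frac{b_i+\lceil\log_2 b_i\rceil+1}{(i+2)^2},
\]
where each factor is eventually at most roughly $\tfrac14 + o(1)$, so the series converges and the propagation delay is $O(m)$. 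Summing yields $n = O(m+k)$.

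The main technical obstacle is making the error-probability recursion fully rigorous: although the asymptotic bound $u_l \ge 2^l(u_0 - C)$ is almost immediate, pinning down the constant $C$ (and thereby justifying the precise threshold $p < 3^{-8}/4$) requires carefully tracking the polynomial prefactors through the first few levels so that the squaring dominates before the factors $(B_l t_l)^2$ can blow up. A secondary bookkeeping issue, parallel to the end of Section \ref{sec:1bit_analysis}, is handling the cases $k < k_{L'}$ and $m$ not a multiple of $\prod_{i=1}^{L'} t_i$; I would do this by imagining the decoder applied at level $\max(L,L')$ on an augmented chain with noise-free appended hops and transferring the bound back to the actual decoder via the idempotence of the recursive decoding procedure.
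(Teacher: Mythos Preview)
Your proposal is correct and follows essentially the same route as the paper: the same recursion \eqref{eq:proposal_rec}, the same split into propagation and transmission delay, and the same convergence arguments via $\sum \log b_l / b_l < \infty$ and $b_l/t_l \le 1/2$. The two obstacles you flag are exactly the points the paper spells out: for the threshold it runs an explicit induction with the ansatz $\epsilon_l \le \tfrac{1}{4}c^{-2^l}(l+3)^{-8}$ (where $c = 3^{-8}/(4p) > 1$) rather than your log-transform $u_l \ge 2^l(u_0-C)$, and for the transmission delay it does the case split $k \lessgtr k_{L'}$, encoding at level $L$ (not $L'$) when $k > k_{L'}$ and bounding $n_{L'}$ by $O(m)$ via the propagation-delay estimate when $k \le k_{L'}$.
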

    
    We state this result with respect to $m \to \infty$ for convenience, but one can easily adapt the protocol and its analysis to fix $m$ and only take $k \to \infty$; intuitively, adding more hops only makes the problem more difficult.  Moreover, in Corollary \ref{cor:main_multi} below, we remove the condition $p < 3^{-8}/4$ and give the precise scaling of the error probability.
    
    
    \subsection{Analysis of the Protocol} \label{sec:analysis}
    
    
    Here we prove Theorem \ref{thm:main_multi}. 
    Let $\epsilon_l$ denote the probability of failure associated with transmitting a length-$n_l$ block between two consecutive level-$l$ nodes (or from node 0 to the first one).  Each associated level-$(l-1)$ block of length $n_{l-1}$ has a failure probability of at most $t_l\cdot \epsilon_{l-1}$, by a union bound over the $t_l$ level-$(l-1)$ nodes the block passes through.  Hence, the probability of having two or more failures among the $b_l + \lceil \log_2 b_l \rceil + 1$ level-($l-1$) blocks is at most $\binom{b_l + \lceil \log_2 b_l \rceil + 1}{2} (t_{l}\epsilon_{l-1})^2$.  Since the redundancy added guarantees resilience to any single corrupted block, it follows that
    \begin{align}
        \epsilon_l &\leq \binom{b_l + \lceil \log_2 b_l \rceil + 1}{2} (t_{l}\epsilon_{l-1})^2 \nonumber \\ &\leq \frac12 ((b_l + \lceil \log_2 b_l \rceil + 1)t_l\epsilon_{l-1})^2.
    \end{align}
    It is convenient to note that $b_l + \lceil \log_2 b_l \rceil + 1 \le 2 b_l$, and accordingly simply this recursion to
    \begin{equation}
        \epsilon_l \le 2 (b_l t_l\epsilon_{l-1})^2.\label{eq:eps_recursion}
    \end{equation}
    
    \begin{proposition} \label{prop:suff_conds}
        Fix the sequences $(b_1,b_2,\dotsc)$ and $(t_1,t_2,\dotsc)$, and let $(\epsbar_1,\epsbar_2,\dotsc)$ be the sequence of upper bounds obtained via \eqref{eq:eps_recursion} (replacing inequality with equality) with $\epsilon_0 = p$.  Then, the preceding protocol transmits $k$ bits across $m$ hops with $n = \calO(m+k)$ transmission time and asymptotically vanishing error probability as $m \to \infty$ (possibly with $k \to \infty$ simultaneously), as long as the following conditions are satisfied: (i) $t_{l+1}b_{l+1}\epsbar_l \rightarrow 0$ as $l \to \infty$; (ii) $b_l/t_l \leq 1/2$ for all $l \ge 1$; and (iii) the summation $\sum_{l=1}^{\infty} \frac{\log b_l}{b_l}$ is finite.
    \end{proposition}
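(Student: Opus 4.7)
The plan is to split the argument into two parts: bounding the total transmission time as $n = \calO(m+k)$, and showing that the error probability vanishes. I would work throughout with $\epsbar_0 = p$ and the recursion $\epsbar_l = 2(b_l t_l \epsbar_{l-1})^2$.

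First, I would show that condition (iii) implies the ratio $n_l/k_l$ is uniformly bounded by some finite constant $R$, via
\begin{equation}
\frac{n_l}{k_l} = \prod_{i=1}^l\Big(1 + \frac{\lceil \log_2 b_i\rceil + 1}{b_i}\Big) \le \exp\Big(\sum_{i=1}^\infty \frac{\lceil \log_2 b_i\rceil + 1}{b_i}\Big) =: R,
\end{equation}
with the latter sum being finite because condition (iii) forces $b_l \to \infty$, which in turn makes $\sum 1/b_l$ converge. The propagation delay is then bounded as $\sum_l (\text{number of level-}l\text{ relays}) \cdot n_l \le m\sum_l n_l/\prod_{i=1}^l t_i$; condition (ii) gives $k_l/\prod t_i \le 2^{-l}$, so each term is at most $R \cdot 2^{-l}$ and the sum is $\calO(m)$. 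For the transmission delay, I would split into two cases: if $L \ge L'$, the encoder sends $\lceil k/k_L\rceil n_L \le 2Rk = \calO(k)$, while if $L < L'$, after zero-padding it sends $n_{L'} \le Rk_{L'} \le Rm$, using $k_{L'} \le \prod_{i=1}^{L'}t_i/2^{L'} \le m-1$ from condition (ii) together with maximality of $L'$.

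The error analysis is the main substance. I would let $q_l$ denote the worst-case failure probability of a single level-$l$ block at the receiver's level-$l$ decoding step. For $l = L'$, a level-$L'$ block traverses at most $\lfloor (m-1)/\prod_{i=1}^{L'} t_i\rfloor + 1 \le t_{L'+1}$ chain segments (using $\prod_{i=1}^{L'+1} t_i > m-1$ by maximality of $L'$), each contributing failure at most $\epsbar_{L'}$, so a union bound yields $q_{L'} \le t_{L'+1}\epsbar_{L'}$. For $l > L'$ the same Hamming-code-style argument behind \eqref{eq:eps_recursion} applies at the decoder but without the $t_l$ factor (since no chain traversal is involved at this stage), giving $q_l \le 2b_l^2 q_{l-1}^2$. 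Writing $q_l = A_l \epsbar_l$ and comparing with $\epsbar_l = 2b_l^2 t_l^2 \epsbar_{l-1}^2$ yields $A_l \le A_{l-1}^2/t_l^2$; starting from $A_{L'} \le t_{L'+1}$ and using $t_l \ge 2$ (from condition (ii) with $b_l \ge 1$), I would show by induction that $A_l \le 1$ for all $l \ge L'+1$, so $q_l \le \epsbar_l$ in that range.

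A final union bound over the top-level blocks, combined with condition (i), completes the argument. In Case A ($L \ge L'$), the total error is at most $\lceil k/k_L\rceil q_L \le b_{L+1}\cdot t_{L+1}\epsbar_L$ (using $\lceil k/k_L\rceil \le b_{L+1}$ from $k < k_{L+1}=b_{L+1}k_L$), which vanishes as $L \to \infty$ by condition (i); if instead $k$ stays bounded so that $L$ is bounded, then since $L' \to \infty$ as $m \to \infty$ eventually $L < L'$ and one is in Case B. In Case B ($L < L'$), a single top-level block gives total error $q_{L'} \le t_{L'+1}\epsbar_{L'} \le t_{L'+1}b_{L'+1}\epsbar_{L'} \to 0$ by condition (i). The main obstacle is the two-regime structure of the error propagation---additive accumulation along the chain up to level $L'$ versus multiplicative Hamming-code decoding beyond it---and tracking the constants $A_l$ through that transition, where condition (ii) is precisely what is needed to prevent the decoder-side recursion from blowing up.
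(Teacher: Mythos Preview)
Your argument is correct. The transmission-time half mirrors the paper almost exactly, but your error-probability analysis takes a genuinely different route. The paper handles the case $L \ge L'$ by a ``rounding-up'' device inherited from the 1-bit analysis: it imagines extending the chain with noiseless simulated hops out to the next multiple of $\prod_{i=1}^{L} t_i$, notes that decode/re-encode over noiseless links is idempotent, and then reads off the bound $b_{L+1}\epsbar_L$ directly from the definition of $\epsbar_L$. You instead track the decoder's recursive processing explicitly via the auxiliary sequence $q_l$ and the ratio $A_l = q_l/\epsbar_l$, using condition~(ii) (which forces $t_l \ge 2$) to show $A_l \le 1$ for $l > L'$. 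Your approach is more hands-on and makes the role of condition~(ii) in the error bound fully transparent; the paper's simulation argument is shorter but leans on the 1-bit section for its justification and leaves the interaction between relay levels $\le L'$ and decoder levels $> L'$ somewhat implicit.

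One minor wrinkle: your claim that condition~(iii) alone forces $b_l \to \infty$ (and hence $\sum 1/b_l < \infty$) fails if $b_l = 1$ infinitely often, since then $\log b_l / b_l = 0$ while $1/b_l = 1$. The easy fix is to note $b_l \ge 2$ (the protocol degenerates otherwise, and the specific choices in Theorem~\ref{thm:main_multi} satisfy this), after which $\log b_l/b_l$ is bounded away from zero on any bounded set of $b_l$-values, so condition~(iii) does force $b_l \to \infty$, and $\sum 1/b_l$ converges by comparison with $\sum \log b_l/b_l$. The paper sidesteps this by bounding $\lceil \log_2 b_i\rceil + 1 \le 3\log_2 b_i$ directly, which also tacitly requires $b_i \ge 2$.
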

    
    We proceed with the proof of this proposition, referring to conditions (i)--(iii) throughout.
    
    We need to show that the total transmission time associated with a message of length $k$ over $m$ hops scales as $\calO(m+k)$.  Recall the notions of transmission delay and propagation delay from Section \ref{sec:1bit_analysis}.  
    We will show that the transmission delay scales as $\calO(m+k)$ and the propagation delay scales as $\calO(m)$, starting with the latter.
    
    A level-$l$ relay node waits to receive $\prod_{i=1}^l (b_i + \lceil\log_2 b_i\rceil + 1)$ bits before transmitting.  Hence, since there are at most $\frac{m}{\prod_{i=1}^l t_i}$ relay nodes at level $l$, the total propagation delay is at most
    \begin{align}
        &\sum_{l=1}^{L'}  \frac{m}{\prod_{i=1}^l t_i} \cdot \prod_{i=1}^l (b_i + \lceil\log_2 b_i\rceil + 1) \nonumber \\
            & \quad \le m \cdot \sum_{l=1}^{\infty} \bigg( \prod_{i=1}^l \frac{b_i}{t_i}\bigg(1 + \frac{3 \log_2 b_i}{b_i}\bigg) \bigg) \nonumber \\
            &\quad \leq c m\sum_{l=1}^{\infty} \bigg( \prod_{i=1}^l \frac{b_i}{t_i} \bigg), \label{eq:same_arg}
    \end{align}
    where $c$ is a constant such that $\prod_{i=1}^l (1+\frac{3\log_2 b_i}{b_i})\leq c$ for all $l$; the existence of such a constant is guaranteed because $\sum_{l=1}^{\infty} \frac{\log b_l}{b_l}$ is finite (condition (iii)).\footnote{This is most easily seen by applying the bound $1+z \le e^{z}$ to the term $1+\frac{3\log_2 b_i}{b_i}$.} Since $\frac{b_i}{t_i} \leq 1/2$ (condition (ii)), we also have that $\sum_{l=1}^{\infty} \prod_{i=1}^l \frac{b_i}{t_i}$ is finite, and hence, the propagation delay is linear in $m$. 
    
    To analyze the transmission delay, we consider two cases. 
    \begin{itemize}
        \item {\bf Case 1 ($k \leq k_{L'}$)}. Recall that in this case, the message is zero-padded to increase the length to $k_{L'}$.  Hence, a total of $\prod_{i=1}^{L'} (b_i + \lceil\log_2 b_i\rceil + 1)$ bits are sent by the encoder. Observe that this is equal to the propagation delay incurred by a level-$L'$ node, and we have already shown that this is at most $\calO(m)$.
        
        \item {\bf Case 2 ($k > k_{L'}$)}. In this case, the message length is not increased, and $L$ (from Section \ref{sec:enc_dec}) is the level at which the message is encoded (in $\lceil k/k_L \rceil \le b_{L+1}$ blocks). The multiplicative increase in length (of the message size) due to level-$L$ encoding takes the form
        \begin{equation}
            \prod_{l=1}^{L} \Big(1+\frac{\lceil \log_2 b_l \rceil + 1}{b_l}\Big), \label{eq:L_product}
        \end{equation}
        which is bounded independently of $L$ since $\sum_{l=1}^{\infty} \frac{\log b_l}{b_l}$ is finite (analogous to the last step in \eqref{eq:same_arg}).  Hence, the transmission delay is $\calO(k)$ in this case.
    \end{itemize}
    
    It remains to analyze the error probability.  We note that similarly to the analysis of the 1-bit protocol (Section \ref{sec:1bit_analysis}), even if the decoder's level $\max\{L,L'\}$ does not coincide with the notion of divisibility by $\prod_{i=1}^l t_i$ (corresponding to divisibility by $4^l$ in the 1-bit case), we can still perform the analysis as though the number of nodes were rounded up to produce such divisibility.   Then, using the same cases as above, we have the following:
    \begin{itemize}
        \item {\bf Case 1:} The message is encoded at level $L'$.  There are at most $b_{L'+1}$ level-$L'$ blocks, each of which passes through at most $t_{L'+1}$ level-$L'$ nodes.  The error probability associated with each of these is at most $\epsbar_{L'}$, so by the union bound, the overall error probability is at most $t_{L'+1} b_{L'+1} \epsbar_{L'}$.  By property (i) in Proposition \ref{prop:suff_conds}, this approaches zero as $L'$ increases.
        \item {\bf Case 2:} The message is encoded at level $L$ with $L \ge L'$, and the decoder is the only level-$L$ node.  There are again at most $b_{L+1}$ level-$L$ blocks, so similarly to Case 1, the overall error probability is at most $b_{L+1} \epsbar_L$.  Again using property (ii) in Proposition \ref{prop:suff_conds}, this approaches zero as $L$ increases (and since $L \ge L'$, it suffices that $L'$ increases).
    \end{itemize}
    The assumption $m \to \infty$ in Proposition \ref{prop:suff_conds} ensures that $L' \to \infty$, and we obtain asymptotically vanishing error probability as desired, thus proving Proposition \ref{prop:suff_conds}.  It remains to set $b_l$ and $t_l$ to prove Theorem \ref{thm:main_multi}.	
    
    {\bf Setting $b_l$ and $t_l$.} There are many possible choices of $b_l$ and $t_l$ such that the conditions of Proposition \ref{prop:suff_conds} are satisfied. 
    We consider the choices $t_l = (l+2)^2$ and $b_l = \lfloor \frac{(l+2)^2}{4}\rfloor$. Suppose there is a constant $c>1$ such that
    \begin{equation}
        \epsilon_0 \leq \frac{1}{4c} \cdot 3^{-8}. \label{eq:base_case}
    \end{equation}
    We first observe that substituting our choices of $b_l$ and $t_l$ into \eqref{eq:eps_recursion} gives
    \begin{equation}
        \epsilon_l \leq \frac{1}{8} (l+2)^8 \epsilon_{l-1}^2 \label{eq:eps_recursion2}
    \end{equation}
    We proceed to prove by induction that $\epsilon_l \leq \frac{c^{-2^l}}{4} (l+3)^{-8}$ for all $l \ge 1$.  The case $l=0$ holds by assumption in \eqref{eq:base_case}.  For the induction step, we substitute the induction hypothesis into \eqref{eq:eps_recursion2} to obtain
    \begin{align}
        \epsilon_l &\leq \frac{1}{8}(l+2)^8 ( (c^{-2^{l-1}}/4) \cdot (l+2)^{-8})^2 \nonumber \\ &=  \frac{1}{128} \cdot (l+2)^{-8} \cdot c^{-2^{l}} \leq \frac{1}{4} c^{-2^{l}}(l+3)^{-8},
    \end{align}
    where the last equality holds because $(l+3)^8 \leq 32 \cdot (l+2)^8$ for all $l\geq 0$; this is seen by noting that $\frac{l+3}{l+2} = 1+\frac{1}{l+2}$, which is maximized at $l=0$ and leads to $\big(\frac{3}{2}\big)^8 \approx 25.6 < 32$.
    
    Theorem \ref{thm:main_multi} now follows by recalling the assumption $p < 3^{-8}/4$ and setting $c = \frac{3^{-8}}{4p} > 1$.  Since $\epsilon_l \leq \frac{c^{-2^l}}{4} (l+3)^{-8}$ but $t_l$ and $b_l$ are only quadratic in $l$, we have the desired condition $t_{l+1}b_{l+1}\epsbar_l \rightarrow 0$ in Proposition \ref{prop:suff_conds}, with the left-hand side simplifying to $c^{-(2^l)(1+o(1))}$ (see Section \ref{sec:removing_assump} for a precise characterization of the final error probability).  The condition $b_l / t_l \le 1/2$ in Proposition \ref{prop:suff_conds} also trivially holds, and the final condition $\sum_{l=1}^{\infty} \frac{\log b_l}{b_l} < \infty$ holds because $b_l$ is quadratic in $l$.
    
    
    \subsection{Low-Noise and High-Noise Asymptotics}
    
    {\bf Low noise.} Similarly to the 1-bit setting, we handle the low-noise regime by increasing the number of level-0 nodes.  Here we do so by choosing $t_1 = \Theta(1/p)$ (keeping the higher $t_l$ values unchanged), with a suitably-chosen implied constant so that $\epsilon_1$ is kept below $3^{-8}/4$.  This means that a fraction $1 - \calO(p)$ of the nodes have level $0$ and are performing direct forwarding.  Thus, the contribution to the propagation delay is at most $m$ from level-$0$ nodes and $\calO(pm)$ from higher-level nodes (by a similar analysis to above), for a total of $m(1+\calO(p))$.
    
    Due to the term \eqref{eq:L_product} with $L$ that still grows unbounded for any fixed $p > 0$, an $\calO(k)$ term remains in the transmission delay, with an implied constant that does not approach one even as $p \to 0$.  Hence, the total transmission time is $n = \calO(k) + m(1+\calO(p))$.  When $k = o(m)$, this is optimal in view of the $m(1+\Omega(p))$ lower bound that holds even when $k=1$ (see Section \ref{sec:noise_asymp}).
    
    We do not attempt to establish the precise low-noise asymptotics for the remaining regimes $k = \Theta(m)$ and $m = o(k)$, which appear to be more challenging.  A Taylor expansion of the channel capacity gives a lower bound of $n \ge k\big(1 + \Omega\big(p\log\frac{1}{p}\big)\big)$ (even in the one-hop setting), but we leave it as an open problem as to when this can be matched in the multi-hop setting.
    
    {\bf High noise.} In the high-noise setting in which $p = \frac{1-\delta}{2}$, our protocol (with additional repetition coding as mentioned after Theorem \ref{thm:main_multi}) is optimal up to constant factors.  Since $\Theta(\delta^{-2})$ repetitions are required to reduce the effective crossover probability below $3^{-8}/4$, the transmission time scales as $\Theta(\delta^{-2}(m+k))$.
    
    In Section \ref{sec:noise_asymp}, we proved a lower bound of $n = \Omega(\delta^{-2} m)$ even for the 1-bit setting. In addition, since the capacity of ${\rm BSC}(\frac{1-\delta}{2})$ scales as $\Theta(\delta^{2})$ as $\delta \to 0$, we need $\Theta(\delta^{-2}k)$ channel uses to reliably send a message of length $k$, even in a 1-hop channel.  Therefore, the scaling $n = \Theta(\delta^{-2}(m+k))$ is optimal.
    
    \subsection{Variations and Discussion} \label{sec:removing_assump}
    
    We conclude this section by discussing the error probability, computation time, and an `anytime' version of our protocol.  To simplify the discussion, we focus on the case that $k = \Theta(m)$, and hence both also scale as $\Theta(n)$.
    
    {\bf Decay rate of the error probability.}  Recall that $k_l = \prod_{i=1}^l b_i$, so our choice $b_i = (i+2)^2$ gives $k_l = e^{\Theta(l \log l)}$.  A similar argument also gives $n_l = e^{\Theta(l \log l)}$, and it follows that both $L$ and $L'$ scale as $\Theta(\frac{\log n}{\log \log n})$ (e.g., to make $n_L$ coincide with $n$).  On the other hand, the error probability is dictated by $\epsilon_L = e^{-\Theta(2^L)}$ (or similarly with $L'$), and substituting $L = \Theta(\frac{\log n}{\log \log n})$ gives a final behavior of $\exp(-n^{\Theta(1/\log \log n)})$.   This is asymptotically vanishing with a decay rate marginally worse than $e^{-\Theta(n^c)}$ (for arbitrarily small $c$), but marginally better than $\frac{1}{{\rm poly}(n)}$ (with an arbitrarily large degree); these comparisons are are most easily seen by writing $\frac{1}{{\rm poly}(n)} = e^{-\Theta(\log n)}$ and comparing the terms in the exponents.
    
    The preceding discussion and the above idea of using $\Theta(\delta^{-2})$ repetitions are summarized in the following corollary.
    
    \begin{cor} \label{cor:main_multi}
        For any fixed $p \in \big(0,\frac{1}{2}\big)$, in the limit $m \to \infty$ with $k = \Theta(m)$, our protocol with $t_l = (l+2)^2$ and $b_l = \lfloor \frac{(l+2)^2}{4}\rfloor$, combined with a suitably-chosen constant number of repetitions, attains an error probability of at most $\exp(-n^{\Theta(1/\log \log n)})$ with a total transmission time of $n = \calO(m)$.
    \end{cor}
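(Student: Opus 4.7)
The plan is to combine two ingredients that are already largely in place: (i) the repetition-coding reduction used throughout the paper for general noise levels, and (ii) the per-level error bound proved in Section \ref{sec:analysis}. For any fixed $p \in \big(0,\frac{1}{2}\big)$ that does not satisfy $p < 3^{-8}/4$, I would let every ``logical bit'' in the protocol of Theorem \ref{thm:main_multi} consist of $r = r(p)$ physical channel uses, combined by a majority vote. By Hoeffding's inequality \eqref{eq:Hoeffding} with $\epsilon = \frac{1}{2} - p$, a constant $r = \Theta\big((1-2p)^{-2}\big)$ suffices to drive the effective crossover probability below $3^{-8}/4$, at the cost of a constant multiplicative blowup in transmission time. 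Under the assumption $k = \Theta(m)$, this preserves $n = \calO(m + k) = \calO(m)$, so it remains to analyze the error probability under the hypothesis of Theorem \ref{thm:main_multi}.

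Next I would invoke the bound $\epsilon_l \le \frac{c^{-2^l}}{4}(l+3)^{-8}$ established by induction in the proof of Theorem \ref{thm:main_multi}, where $c = \frac{3^{-8}}{4p} > 1$. The total error probability is bounded (via the union bound over at most $t_{L+1}b_{L+1}$ or $b_{L+1}$ level-$L$ blocks, exactly as in the two cases of Section \ref{sec:analysis}) by $\calO(L^4)\cdot \epsilon_L = e^{-\Theta(2^L)}$, where $L = \max\{L,L'\}$ denotes the maximum level used by the protocol. So everything reduces to expressing $L$ in terms of $n$.

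For this inversion, observe that with $b_i = \lfloor (i+2)^2/4\rfloor = \Theta(i^2)$ we have $k_L = \prod_{i=1}^L b_i$, and by Stirling $\log k_L = \sum_{i=1}^L \log b_i = \Theta(L \log L)$; the same estimate applies to $n_L = \prod_{i=1}^L (b_i + \lceil \log_2 b_i\rceil + 1)$ since $b_i + \lceil\log_2 b_i\rceil+1 = \Theta(i^2)$ as well. Because $n = \Theta(m) = \Theta(k)$ and $k_L \le k < k_{L+1}$ (respectively, an analogous two-sided bound relating $n$ to $n_{L'}$ via the propagation-delay calculation \eqref{eq:same_arg}), we obtain $\log n = \Theta(L \log L)$, which inverts to $L = \Theta\big(\frac{\log n}{\log \log n}\big)$. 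Substituting into $e^{-\Theta(2^L)}$ gives $2^L = e^{\Theta(L)} = e^{\Theta(\log n / \log\log n)} = n^{\Theta(1/\log\log n)}$, yielding the claimed decay $\exp\!\big(-n^{\Theta(1/\log\log n)}\big)$.

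The main obstacle is the Stirling-style estimate $\log k_L = \Theta(L \log L)$ together with its clean inversion to $L = \Theta(\log n/\log\log n)$: one has to check that the $\Theta(\cdot)$ factors hiding inside $\log k_L$ do not spoil the inversion step (they do not, because $\log L$ is slowly varying, so $L \log L = \Theta(\log n)$ forces $L = \Theta(\log n/\log\log n)$). Everything else is a direct application of results already proved: the repetition reduction from Section \ref{sec:1bit_var}, the per-level recursion $\epsilon_l \le (c^{-2^l}/4)(l+3)^{-8}$ from the proof of Theorem \ref{thm:main_multi}, and the union bounds of Section \ref{sec:analysis}. In particular, no modification of the protocol itself is needed beyond the repetition wrapper.
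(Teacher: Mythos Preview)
Your proposal is correct and follows essentially the same argument as the paper: reduce to $p<3^{-8}/4$ via constant-rate repetition, use the per-level bound $\epsilon_l \le \tfrac{c^{-2^l}}{4}(l+3)^{-8}$ to get overall error $e^{-\Theta(2^L)}$, observe $k_L,n_L = e^{\Theta(L\log L)}$, and invert to $L = \Theta(\log n/\log\log n)$, yielding $\exp(-n^{\Theta(1/\log\log n)})$. The only cosmetic slip is writing ``$L=\max\{L,L'\}$'' for the operating level; otherwise your level of detail on the Stirling estimate and its inversion is, if anything, slightly more explicit than the paper's own discussion.
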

    
    {\bf Computation time.}
    Since Lemma \ref{lem:error_correcting} is based on the Hamming code, the encoding-decoding process associated with $b$ blocks, each having $k'$ bits, is given by $\calO(bk'\log b)$.  In particular, bit-wise Hamming decoding can be done by multiplying (modulo-2) a length-$b$ binary vector with the $b \times \calO(\log b)$ parity check matrix in $\calO(b \log b)$ time.  If one bit was corrupted, the resulting vector will index the corruption location.\footnote{In practice, the performance might be further improved by aggregating the $k'$ indices produced by the bit-wise decoding procedures.}
    
    For the encoder, the time spent to encode a single level-$l$ block is given by $\calO(b_l n_{l-1} \log b_l) = \calO(n_l \log b_l)$. Since there are a total of $\calO\big(\frac{n}{n_l}\big)$ such blocks, the total computation time spent on level-$l$ encoding is given by $\calO(n \log b_l)$.  Summing over all levels, the total computational complexity of encoding is as follows (or similarly with $L'$ in place of $L$):
    \begin{align}
        \sum_{l=0}^L \calO(n \log b_l) &= \calO\Big(n \log \Big(\prod_{l=0}^L b_l\Big)\Big) \nonumber \\  &= \calO(n \log k_L) = \calO(n \log n).
    \end{align}
    
    {\bf Encoding without knowledge of $k$.} When the encoder does not know $m$, we can consider a `level-$\infty$' encoder that reads an arbitrarily long message stream and recursively adapts its encoding as bits are received.  
    To achieve this, the encoder operates as follows: (i) Always send out each bit directly; (ii) Whenever the message length is a multiple of $\prod_{i=1}^l b_i$ for some $l \ge 1$, append the relevant redundancy blocks.   Since appending the redundancy blocks does not require modifying any of those already sent, the encoder does not need to wait for a block to arrive completely before sending according to what it has already received.  Thus, the desired `anytime' behavior of the encoder is achieved.
    
    \section{Conclusion}
    
    We have introduced a simple protocol for attaining positive information velocity, established order-optimal scaling laws for the low-noise and high-noise limits, and adapted our protocol to the multi-bit setting.  While several open questions remain, we believe that the following two directions are of particular interest:
    \begin{itemize}
        \item As we discussed in Section \ref{sec:1bit_var}, we did not optimize constant factors in our analysis.  It would be of significant interest to work towards tight upper and lower bounds on the information velocity via refined protocols and/or a more careful analysis.
        \item It would also be of interest to improve the error probability to be exponentially decaying with respect to the block length. The main result of \cite{schulman_1994} shows that this is information-theoretically possible (even in the multi-bit setting), but doing so with a simple protocol appears to be challenging.
    \end{itemize}
    
    \appendix
    
    \subsection{Overview of an Existing Approach} \label{sec:schulman}
    
    In this appendix, we present a simplified variant of Rajagopalan and Schulman's efficient protocol \cite[Sec.~6]{schulman_1994} that narrowly falls short of attaining positive information velocity.  It is useful to build this up from a weaker protocol.
    
    {\bf Protocol $\Psf_m^{(0)}$:} We first note that it is straightforward to transmit a single bit reliably over $m$ hops in $\calO(m \log m)$ channel uses.  This is because each node can estimate a bit repeated from the previous node in $\calO(\log m)$ channel uses with error probability $\frac{1}{m^2}$, and a union bound gives an overall error probability of at most $\frac{1}{m}$.
    
    For the purpose of building towards refined protocols, it is useful to improve this simple protocol to obtain a smaller error probability.  To achieve this, we can chain $m$ blocks of length $\calO(\log m)$ one after the other, and apply the above procedure independently in each block; each node transmits the estimate from its $(i-1)$-th block while receiving its $i$-th block.  The total transmission time remains $\calO(m \log m)$, but we claim that the upper bound on the error probability significantly improves to $e^{-m \log m}$.
    
    To see the latter claim, let the decoder take a majority vote over the $m$ blocks.  Since each block fails with probability at most $\frac{1}{m}$, the probability of any given set of $\frac{m}{2}$ blocks all failing is at most $\big(\frac{1}{m}\big)^{m/2} = e^{-(m/2) \log m}$.  Then, by a union bound over at most $2^m$ combinations of positions of the failing blocks, the overall error probability is at most $e^{-m \log m}$ for large enough $m$.

    {\bf Protocol $\Psf_m^{(1)}$:} Let $\Psf_{\log m}^{(0)}$ represent the preceding protocol with $\log m$ hops (ignoring rounding issues).  In this refined protocol, we use protocol $\Psf_{\log m}^{(0)}$ to transmit from node 0 to node $\log m$, then from node $\log m$ to node $2\log m +1$, and so on, until node $m$ is reached (ignoring divisibility issues).  Thus, there are at most $\frac{m}{\log m}$ sub-chains of nodes that apply protocol $\Psf_{\log m}^{(0)}$.
    
    Replacing $m$ by $\log m$ in the above $e^{-m \log m}$ error probability, we see that $\Psf_{\log m}^{(0)}$ fails with probability at most $e^{-\log m \cdot \log \log m}$.  Similarly to the above, we boost the error probability by repeating protocol $\Psf_{\log m}^{(0)}$ in successive blocks, this time doing so $\frac{m}{\log m}$ times.  The probability of $\frac{m}{ 2 \log m}$ specific blocks failing is at most $\big(e^{-\log m \cdot \log \log m}\big)^{m/(2 \log m)} = e^{-(m/2)\log\log m}$, and a union bound gives an overall error probability of at most $e^{-m \log \log m}$ for large enough $m$.  Moreover, the number of transmissions is now $\calO(m \log \log m)$, since the multiplicative overhead in applying protocol $\Psf_{\log m}^{(0)}$ is logarithmic with respect to the quantity in the subscript.
    
    {\bf Continuing recursively:} We can now replace $\Psf_{\log m}^{(0)}$ by $\Psf_{\log m}^{(1)}$ in the above reasoning, and continue recursively an arbitrary number of times, to obtain an error probability of at most $e^{-m \log\dotsc\log m}$ in $\calO(m \log \dotsc \log m)$ channel uses with an arbitrary number of logs.  Of course, the number of logs is only of theoretical (not practical) interest, since even $\log \log m$ is at most $5$ for any practical value of $m$.\footnote{Accordingly, the statements of the form ``for large enough $m$'' above may in fact amount to requiring enormous values.}  Nevertheless, no matter how many times we recurse, we fall narrowly short of attaining positive information velocity; this is due to the nodes waiting $\omega(1)$ time\footnote{Specifically, $\calO(\log n)$ time in protocol $\Psf_m^{(0)}$, $\calO(\log \log n)$ time in protocol $\Psf_m^{(1)}$, and so on.} to receive useful information from the previous node.
    
    {\bf Final protocol and result:} The protocol in \cite{schulman_1994} refines the above idea to allow for the transmission of $k$ bits, instead of just considering $k=1$, and accordingly also makes use of a constant-rate code that can correct a constant fraction of adversarial errors.  In addition, the parameters are set slightly differently to those above, though the same general idea (and limitation) remains.  The result in \cite[Thm.~6.1]{schulman_1994} states an error probability of $e^{-\Omega(m + k)}$ and a block length of $\calO((m+k)e^{O(\log^*(m+k))})$, where $\log^*(\cdot)$ is the iterated logarithm (i.e., the number of times $\log(\cdot)$ needs to be applied to bring the argument below one).  Note that $e^{O(\log^*(m+k))}$ is asymptotically smaller than any expression of the form $\log\dotsc\log(m+k)$.
    
    \subsection{Towards Exponentially Small Error Probability} \label{app:error_prob}
    
    In this appendix, we provide two approaches for modifying our 1-bit protocol to further reduce the error probability below $e^{-\Theta(\sqrt{n})}$, and as low as $e^{-\Theta(n^{1-\epsilon})}$ for any fixed $\epsilon > 0$.
    
    {\bf Approach 1: Modified constants.}  We modify the protocol in Section \ref{sec:protocol} to have two general integer-valued parameters, $b$ and $t$ (in the original protocol, $b=3$ and $t=4$):
    \begin{itemize}
        \item The level of a node $i \ge 1$ is the largest $l$ such that $t^l$ divides $i$.
        \item A level-$l$ node reads in $b^l$ bits, breaks it up into $b$ blocks of $b^{l-1}$ bits each, performs level-$(l-1)$ decoding on each, and estimates $\Theta$ according to a majority of $b$ before forwarding its estimate.
    \end{itemize}
    To avoid tie-breaking issues, we assume that $b$ is odd-valued.  The rest of the protocol remains the same.
    
    We again let $\epsilon_l$ represent the error probability associated with a level-$l$ block.  Instead of $\epsilon_{l+1} \le 48 \epsilon_l^2$ from the proof of Lemma \ref{lem:pe_recursion}, the recurrence relation for the error probability now becomes
    \begin{equation}
        \epsilon_{l+1} \leq \binom{b}{\frac{b+1}{2}} (t \epsilon_l)^{\frac{b+1}{2}}
    \end{equation}
    by an analogous argument as in Lemma \ref{lem:pe_recursion}.  Taking the log on both sides, we obtain
    \begin{equation}
        \log (\epsilon_{l+1}) \leq \frac{b+1}{2} \log (\epsilon_l) + O(1),
    \end{equation}
    where the $O(1)$ term depends on $b$ and $t$.  Then, for sufficiently small $\epsilon_0$ (a requirement we can overcome using repetition coding as usual), applying this equation recursively $l$ times gives
    \begin{equation}
        \log (\epsilon_{l+1}) = -\Theta\left(\left(\frac{b+1}{2}\right)^l\right). \label{eq:unrolled}
    \end{equation}
    On the other hand, up to rounding issues that only affect the constants, we have $m = t^L$, and substitution into \eqref{eq:unrolled} (with $l=L$) gives
    \begin{equation}
        \log (\epsilon_{L+1}) = -\Theta(m^{\log((b+1)/2)/\log t})
    \end{equation}
    To achieve positive information velocity, we need $t > b$.  It suffices to set $t = 2(b+1)$, and we observe that by choosing $b$ large enough, the resulting exponent $\frac{\log((b+1)/2)}{\log (2(b+1))}$ can be made arbitrarily close to one.  Thus, we deduce an error probability of $e^{-\Theta(m^{1-\epsilon})} = e^{-\Theta(n^{1-\epsilon})}$ for arbitrarily small $\epsilon > 0$ while maintaining $n = \calO(m)$, as desired.
    
    {\bf Approach 2: Chaining multiple instances.} In the proof of Lemma \ref{lem:pe_recursion}, we saw that the transmission delay is $3^{\lfloor \log_4 m \rfloor} = \calO(m^{\frac{\log 3}{\log 4}}) = o(m)$.  However, we can afford it to be $\calO(m)$ without affecting the requirement $n = \calO(m)$.  Hence, defining $\alpha = 1-\frac{\log 3}{\log 4} \approx 0.2075$, we can use the idea of repeating the entire protocol $m^{\alpha}$ times and decoding according to majority vote.  When doing so, we chain the instances of the protocol one after the other so that the nodes are continually transmitting after the initial delay incurred in the first instance; hence, the propagation delay is still only counted once, and only the transmission delay is blown up.
    
    The error probability for a single trial is $e^{-\Theta(\sqrt m)}$, so the probability of any particular set of $\frac{m^\alpha}{2}$ trials failing scales as $(e^{-\Theta(\sqrt m)})^{m^{\alpha}/2} = e^{-\Theta(m^{\frac{1}{2} + \alpha})}$.  A union bound over all ${m^{\alpha} \choose m^{\alpha}/2} \le 2^{m^{\alpha}}$ such sets contributes only a lower-order term in the exponent, so we still maintain $e^{-\Theta(m^{\frac{1}{2} + \alpha})} \approx e^{-\Theta(m^{0.7075})}$ scaling for the final error probability.
    
    We can build on this idea by using the parametrized protocol from Approach 1 above, but this time keeping $b=3$ and only increasing $t$.  For a single use of the protocol, the maximum number of levels is $L = \log_t m + O(1)$, and the power of $m$ in the exponent of the error probability is $\frac{1}{\log_2 t}$ (as a sanity check, setting $t = 4$ gives $\frac{1}{2}$).  This seemingly worsens as $t$ increases, but the advantage is that we can now repeat the entire protocol $m^{\alpha}$ times with $\alpha = 1 - \frac{\log 3}{\log t}$.  This leads to a final power of $m$ given by $\alpha + \frac{1}{\log_2 t} = 1 - \frac{\log_2 3 - 1}{\log_2 t}$, which can be made arbitrarily close to one by increasing $t$.
    
    {\bf Discussion.} In both of the above approaches, the ratio $\frac{m}{n}$ approaches zero as we take $\epsilon \to 0$ in the final $e^{-\Theta(n^{1-\epsilon})}$ scaling.  Hence, we narrowly fall short of attaining exponentially small error probability with $n = \calO(m)$, which is known to be information-theoretically possible \cite{schulman_1994}.
    
    \bibliographystyle{IEEEtran}
    \bibliography{general}
    
     \begin{IEEEbiographynophoto}{Yan Hao Ling}
    received the B.Comp.~degree in computer science and the B.Sci.~degree 
    in mathematics from the National University of Singapore (NUS) in 2021. 
    He is now a PhD student in the Department of Computer Science at NUS.
    His research interests are in the areas of
    information theory, statistical learning, and theoretical computer science.
\end{IEEEbiographynophoto}

\begin{IEEEbiographynophoto}{Jonathan Scarlett}
    (S'14 -- M'15) received 
    the B.Eng.~degree in electrical engineering and the B.Sci.~degree in 
    computer science from the University of Melbourne, Australia. 
    From October 2011 to August 2014, he
    was a Ph.D. student in the Signal Processing and Communications Group
    at the University of Cambridge, United Kingdom. From September 2014 to
    September 2017, he was post-doctoral researcher with the Laboratory for
    Information and Inference Systems at the \'Ecole Polytechnique F\'ed\'erale
    de Lausanne, Switzerland. Since January 2018, he has been an assistant
    professor in the Department of Computer Science and Department of Mathematics,
    National University of Singapore. His research interests are in
    the areas of information theory, machine learning, signal processing, and
    high-dimensional statistics. He received the Singapore National Research Foundation (NRF)
    fellowship, and the NUS Early Career Research Award.
\end{IEEEbiographynophoto}
    
\end{document}